\crefname{section}{Sec.}{Secs.}
\Crefname{section}{Section}{Sections}
\crefname{section}{Sec.}{Subsections}
\Crefname{subsection}{Section}{Subsections}
\newtheorem{thm}{Theorem}
\newtheorem{lemma}[thm]{Lemma}
\newtheorem{conj}[thm]{Conjecture}
\theoremstyle{remark}
\theoremstyle{definition}
\newcolumntype{\expand}{}
\long\@namedef{NC@rewrite@\string\expand}{\expandafter\NC@find}
  \def\problem@arg{#1}%
  \def\problem@framed{framed}%
  \def\problem@lined{lined}%
  \def\problem@doublelined{doublelined}%
    \def\problem@hline{}%
      \def\problem@hline{\hline\hline}%
      \def\problem@hline{\hline}%
    \def\problem@tablelayout{|>{\bfseries}lX|c}%
    \def\problem@title{\multicolumn{2}{|l|}{%
        \raisebox{-\fboxsep}{\textsc{#2}}%
      }}%
    \def\problem@tablelayout{>{\bfseries}lXc}%
    \def\problem@title{\multicolumn{2}{l}{%
        \raisebox{-\fboxsep}{\textsc{#2}}%
      }}%
\newcounter{parentnumber}
\definecolor{applegreen}{rgb}{0.55,.71, 0}
\colorlet{dominik}{applegreen!80!black}
\newclass{\postBQP}{PostBQP}
\newclass{\sP}{\#\P}
\newclass{\postNGBS}{PostNGBS}
\newclass{\FBPP}{FBPP}
\begin{document}

\title{Quantum Computational Advantage via High-Dimensional Gaussian Boson Sampling}

\newcommand{\fu}{Dahlem Center for Complex Quantum Systems, Freie Universit{\"a}t Berlin, 14195 Berlin, Germany}
\newcommand{\mfu}{Department of Mathematics and Computer Science, Freie Universit{\"a}t Berlin, 14195 Berlin, Germany}
\newcommand{\hzb}{Helmholtz-Zentrum Berlin f{\"u}r Materialien und Energie, 14109 Berlin, Germany}
\newcommand{\iqim}{Institute for Quantum Information and Matter, Caltech, Pasadena, CA 91125, USA}
\newcommand{\xa}{Xanadu, Toronto, ON M5G 2C8, Canada}
\newcommand{\jqi}{Joint Quantum Institute, NIST/University of Maryland, College Park, MD 20742, USA}
\newcommand{\quics}{Joint Center for Quantum Information and Computer Science, NIST/University of Maryland, College Park, MD 20742, USA}
\newcommand{\ulm}{Institut f{\"u}r Theoretische Physik and Center for Integrated Quantum Science and Technology (IQST),\\
Albert-Einstein-Allee 11, Universit{\"a}t Ulm, 89069 Ulm, Germany}
\newcommand{\montr}{Department of Engineering Physics, \'Ecole Polytechnique de Montr\'eal, Montr\'eal, QC, H3T 1JK, Canada}

\author{Abhinav Deshpande}
\thanks{These two authors contributed equally}
\affiliation{\quics}
\affiliation{\jqi}
\affiliation{\iqim}
\author{Arthur Mehta}
\thanks{These two authors contributed equally}
\affiliation{\xa}
\affiliation{Department of Mathematics, University of Toronto, Toronto, ON M5S 1A1 Canada}
\author{Trevor Vincent}
\affiliation{\xa}
\author{Nicol\'as Quesada}
\affiliation{\xa}
\affiliation{\montr}
\author{Marcel Hinsche}
\affiliation{\fu}
\author{Marios Ioannou}
\affiliation{\fu}
\author{Lars Madsen}
\affiliation{\xa}
\author{Jonathan Lavoie}
\affiliation{\xa}
\author{Haoyu Qi}
\affiliation{\xa}
\author{Jens Eisert}
\affiliation{\fu}
\affiliation{\hzb}
\affiliation{\mfu}
\author{Dominik Hangleiter}
\affiliation{\quics}
\affiliation{\fu}
\author{Bill Fefferman}
\affiliation{Department of Computer Science, The University of Chicago, Chicago, IL 60637, United States}
\author{Ish Dhand}
\thanks{ishdhand@gmail.com}
\affiliation{\ulm}

\begin{abstract} 
Photonics is a promising platform for demonstrating a quantum computational advantage (QCA) by outperforming the most powerful classical supercomputers on a well-defined computational task.
Despite this promise, existing proposals and demonstrations face challenges.
Experimentally, current implementations of Gaussian boson sampling (GBS) lack programmability or have prohibitive loss rates.
Theoretically, there is a comparative lack of rigorous evidence for the classical hardness of GBS.
In this work, we make progress in improving both the theoretical evidence and experimental prospects.
We provide evidence for the hardness of GBS, comparable to the strongest theoretical proposals for QCA.
We also propose a new QCA architecture we call high-dimensional GBS, which is programmable and can be implemented with low loss using few optical components.
We show that particular algorithms for simulating GBS are outperformed by high-dimensional GBS experiments at modest system sizes.
This work thus opens the path to demonstrating QCA with programmable photonic processors.
\end{abstract}

\maketitle
\section*{Introduction}
We are arriving at an exciting era for quantum computing in which quantum experiments are pushing the limits of what is efficiently computable by the most powerful classical supercomputers.
The first major goal for this era is the demonstration of a scalable quantum advantage or \textit{quantum computational advantage (QCA)} (also termed ``{quantum computational supremacy}'') over classical computers. 
QCA is important as a probe of the foundations of computer science, where it can be seen as an experimental violation of the extended Church-Turing thesis, and it also serves as an important benchmarking tool for comparing near-term experiments on different platforms in a fair and consistent manner.
The recent groundbreaking demonstrations of QCA~\cite{arute2019quantum,Zhong2020} constitute the first significant experimental evidence against the extended Church-Turing thesis.

Notwithstanding, multiple potential loopholes have been pointed out~\cite{pednault2019leveraging,kalai2014gaussian,renema2020marginal}.
Indeed, QCA will not be marked by a single isolated experiment but rather will be established by gradually improving and scaling up ``high complexity'' experiments run over the course of many years, which improving classical algorithms will try to simulate.
Our confidence that we have arrived in this new era will grow as multiple experiments, performed in different physical architectures, independently reach this conclusion in a comparable fashion. 
In this way, the goal may be seen as being analogous to Bell inequality violations, which were originally conducted in landmark experiments starting in the 1970s performed on a variety of different platforms but only much later were loopholes closed.

In the same vein, theoretical results about QCA justify the classical hardness of simulating an experiment in the realm of asymptotically large system sizes.
In order to interpret conclusions from experiments performed at a fixed system size, we should also consider the concrete cost of simulating these finite-size experiments using known algorithms.
The two lines of inquiry are complementary to each other and support each other in a claim that any experiment is likely impossible to feasibly simulate with current hardware.

Among different approaches to demonstrating QCA~\cite{Boixo2018,arute2019quantum,Aaronson2013,Zhong2020}, photonics provides a promising path as it enables room-temperature operation, fast gate speeds and remarkable potential for scalability~\cite{bourassa2021blueprint,bartolucci2021fusion}.
Arguably, the most feasible approach to demonstrating QCA with photonics is to perform the \emph{Gaussian boson sampling (GBS)} protocol~\cite{hamilton2017gaussian,Kruse2019}.
Indeed, this protocol is at the heart of the recent QCA demonstration performed by a team from USTC~\cite{Zhong2020}, which employed a GBS device with 100 modes and an average of around 45 photons.
However, GBS has several important limitations.
On the experimental side, current implementations of GBS either lack programmability~\cite{Zhong2020} or have high loss rates, which could render the system classically simulable~\cite{Garcia-Patron2019,Qi2020}.
Also, from a theoretical standpoint, there is a comparative lack of complexity-theoretic evidence for the hardness of GBS~\cite{renema2020marginal} and an understanding of the classical runtime of concrete algorithms to simulate GBS instances. 

In this work, we aim to address these challenges.
We close important theoretical loopholes in the hardness argument for GBS and provide evidence for the hardness of classically simulating GBS even in the presence of loss. 
We moreover propose a new, programmable architecture for GBS that promises better robustness to loss in a near-term experiment and an asymptotic quantum speedup over classical algorithms.
In addition, our proposed architecture is designed so that it is outside of known regimes where current algorithms can simulate finite-size GBS instances in feasible time, as we show through numerical benchmarking.

We first address the open theoretical questions about GBS, namely, hardness in the regime with little overall noise in the form of optical loss.
More specifically, to provide complexity-theoretic evidence for the hardness of approximately simulating GBS, we prove average-case hardness of computing output probabilities in the noise-free case, formulate the so-called `hiding property'~\cite{Aaronson2013} for GBS in terms of a random-matrix theory conjecture and provide analytical and numerical evidence for this conjecture. 
These results bring GBS to the level of evidence shared by other QCA proposals such as \emph{random circuit sampling} (RCS) and conventional \emph{boson sampling} (see e.g., Refs.~\cite{Aaronson2013,Boixo2018,Bouland2019}) up to a mild conjecture in random matrix theory.
We then show that average-case hardness of computing output probabilities still holds in a regime of high loss rates, building on recent results~\cite{bouland2021noise}, and discuss the implications of this result on the noise-regimes in which one may still expect GBS to be hard to simulate on a classical computer.
These results bolster the evidence for QCA in the USTC experiment and also any future GBS experiments.

Given these theoretical results, we then address the programmability versus low-loss tradeoff in current architectures. 
To this end, we introduce a new architecture, high-dimensional GBS, using a time-domain approach. 
This architecture can be implemented programmably with low overall loss while at the same time being hard to simulate for the known classical simulation algorithms.
The hardness of this architecture is borne out by the hardness of computing output probabilities for the lossy, high-dimensional GBS setup.
These results provide evidence of classical hardness for asymptotic system sizes.
In the realm of finite system sizes, we take care to avoid regimes where the experiment can be tractably simulated~\cite{Garcia-Patron2019,oh2021classical}, such as when the linear-optical network has limited connectivity (such as one-dimensional network topology) or when the system is too lossy.
Our proposed high-dimensional GBS architecture voids these algorithms by taking advantage of the enhanced connectivity available in higher dimensions than one.
In this realm, efficient algorithms can be successful in a variety of regimes~\cite{Garcia-Patron2019,oh2021classical} such as when the linear-optical network has limited connectivity (such as one-dimensional network topology) or when the system is too lossy.

To this end, we perform benchmarking simulations to estimate the cost of high-dimensional GBS against state-of-the-art algorithms for simulating GBS and for simulating high-dimensional quantum many-body systems~\cite{quesada2020exact,gray2020hyper}.
These simulations give evidence that classically intractable instances of high-dimensional GBS can be built in the lab with a small number of optical components.
These advantages make high-dimensional GBS an ideal near-term architecture for demonstrating QCA with a programmable photonic device.

Thus, by addressing the above-mentioned shortcomings of GBS from the theoretical and experimental perspectives and understanding the limits of its classical simulability through both asymptotic analysis and finite-size benchmarking, this work paves the way toward more `loophole-free' demonstrations of QCA with a programmable photonic quantum device.

\subsection*{Hardness of approximate GBS}
\label{sec:hardness}

We begin by reviewing and strengthening the hardness argument for the task of simulating GBS as introduced in Refs.~\cite{hamilton2017gaussian,Kruse2019}.
We first introduce the model of Gaussian boson sampling and then examine the evidence for the hardness of approximate boson sampling.
Two properties are required for establishing complexity-theoretic hardness of sampling using the standard QCA arguments, namely \emph{hiding} and \emph{average-case hardness of approximating probabilities}. 
Here, we strengthen the results of Refs.~\cite{hamilton2017gaussian,Kruse2019} by providing strong evidence for these properties in GBS. 
Specifically, we reduce the hiding property to a highly plausible conjecture in random matrix theory, for which we provide analytical and numerical evidence. 
Additionally, we provide evidence for approximate average-case hardness by proving approximate worst-case hardness and near-exact average-case hardness of computing the output probabilities.
Thereby, up to a random-matrix-theory conjecture, we bring the hardness argument for GBS to the same standard as that of boson sampling.
We then extend the latter results to the case of computing output probabilities of noisy GBS, which can be well-motivated when the noise model describing the experimental data is trusted.
These results show that the evidence of a quantum ``signal'' remains in the output distribution even in the presence of noise.
Finally, we discuss the implications of these results on the complexity of simulating GBS in the presence of noise. 

\subsection*{Recap: Gaussian boson sampling}
\label{Sec:GBSRecap}

GBS is the computational task of sampling the photon number statistics of a Gaussian state. 
Obtaining a sample from a typical GBS experiment involves the following steps.
First, a general Gaussian state is prepared at the input, often taken to be $M$ single-mode squeezed vacuum states.
These states are then interfered on an $M$-mode linear/optical interferometer containing beam-splitters and phase shifters.
Finally, the Gaussian state at the output of the interferometer is impinged on $M$ \emph{photon-number-resolving (PNR)} detectors.
The resulting pattern of photon number outcomes from the detectors is the required sample.
Because single-mode squeezed states can be generated and interfered deterministically at room temperatures with high rates, GBS is experimentally feasible on large scales already today, as evidenced by the recent experiment from USTC~\cite{Zhong2020}.

In more detail, a typical GBS experiment involves interfering $M$ single-mode squeezed vacuum states with squeezing parameters $\{r_i\}_{i=1}^M$ at an interferometer specified by an $M \times M$ linear-optical unitary matrix $U$.
Note that some of the modes can be optionally prepared in the vacuum state, and these can be specified by setting their squeezing parameter to zero.

The probability of detecting $n_1$ photons in the first mode, $n_2$ in the second, and so on, denoted by $\mathbf n = (n_1,\ldots, n_M)$, is
\[\label{eq:gbsprobs}
\Pr(\mathbf{n}) = \frac{\left|\mathrm{Haf}(A_{\mathbf{n},\mathbf{n}})\right|^2}{\prod_{j=1}^M n_j! \cosh r_j}.
\]
Here, $A = A^T= U \left(\oplus_{i=1}^M \tanh(r_i) \right) U^T$ is the so-called adjacency matrix of the (pure, zero-displacement) Gaussian state~\cite{hamilton2017gaussian}, and $A_{\mathbf{n},\mathbf{n}}$ is the symmetric matrix of size $N=\sum_{i=1}^M n_i$ (i.e.\ the total photon number) obtained by repeating the $i^{\mathrm{th}}$ column and row of $A$ a total of $n_i$ times.
In particular, if $n_i=0$ then the corresponding row and column is deleted. Finally, the Hafnian $\mathrm{Haf}(\cdot)$ of a symmetric $N \times N$ matrix $B$ is given by
\[\label{eq_Hafnian}
\mathrm{Haf}(B) = \sum_{\mu \in \mathrm{PMP}(N)} \prod_{(i,j) \in \mu} B_{i,j},
\]
where $\mathrm{PMP}(N)$ is the set of perfect matching permutations of $N$ elements for even $N$, i.e., permutations $\mu: [N] \to [N]$ satisfying $\mu(2k-1)< \mu(2k)$, $\mu(2k-1)< \mu(2k+1)$. Equivalently, this is the set of all $N!/(2^{N/2} (N/2)!) = (N-1)!!$ ways of partitioning the set $\{1,2,\ldots,N \}$ into $N/2$ subsets of size 2.
The Hafnian of a $0\times 0$ matrix is defined to be $1$ and that of an odd-size matrix is defined to be $0$, which is a manifestation of the fact that squeezed states are supported on even photon number states only.
By allowing for arbitrary linear-optical unitaries and arbitrary squeezing parameters on each squeezer, an arbitrary symmetric matrix $A$ can be encoded (up to scaling pre-factors) into a Gaussian state.
For generic instances, the best-known algorithms to calculate Hafnians have a runtime scaling as $N^3 2^{N/2}$ where $N$ is the size of the matrix~\cite{bjorklund2019faster}.

\subsection*{Recap: Approximate sampling hardness of boson sampling}
\label{sec_surveyhardness}

Before we state our technical results, we review the main steps of the hardness argument for conventional boson sampling as given by Aaronson and Arkhipov~\cite{Aaronson2013}.
These steps provide context for the hardness results of GBS that we present below.

In a standard boson sampling experiment, instead of interfering single-mode squeezed states at an interferometer as done in Gaussian boson sampling, an $N$-photon $M$-mode Fock state is prepared and evolved under a linear-optical unitary and then measured in the photon-number basis. The boson sampling task is to, given a linear-optical unitary as an input, output samples from the output distribution of a corresponding boson sampling experiment.

Aaronson and Arkhipov showed that it is not possible for a classical computer to efficiently do this task unless certain complexity-theoretic conjectures are false. 
In particular, they reduced the task of approximating the probabilities of outputs to the task of efficient sampling, making use of an approximate counting algorithm due to Stockmeyer~\cite{stockmeyer_complexity_1983}.
This probability estimation can in turn be related to approximating the permanent of a certain sub-matrix of the linear-optical unitary, which is provably hard for a class known as $\#\mathsf{P}$ \cite{Valiant1979}.  While the Stockmeyer reduction is not efficient, the existence of a classical efficient sampling algorithm would imply that $\#\mathsf{P}$-hard problems could be solved using fewer computational resources than expected, amounting to an argument by contradiction.

The main difficulty in the hardness argument for boson sampling arises when extending it to the setting of \emph{approximate sampling}. 
Here, the task is to sample from from any distribution that is within constant-size total-variation distance from a given ideal boson sampling distribution.
This additional constraint takes into account that actual devices are bound to achieve only some finite and typically additive precision. 
In this setting, one may therefore argue for a separation of computational power between quantum and classical devices. 

Given this constraint, the hardness argument for the task of approximate sampling must take into account that the constant error budget on the distribution can be distributed \emph{arbitrarily} across all outcome probabilities. 
In particular, this means that any specific outcome probability of the actually sampled distribution might have a large (constant-size) error when compared to the ideal distribution, which would imply that the sampler cannot be used to estimate the true outcome probabilities. 
To get around this issue, the argument is extended to random problem instances: via a property of the distribution over problem instances called \emph{hiding}, one can then translate typical outcomes of fixed instances to fixed outcomes of random instances.
This enforces that with high probability, the overall constant error budget for the entire distribution is manifest in small errors on the individual probabilities that are proportional to the inverse size of the sample space, that is, $\propto 1/\binom{M}{N}$.
Technically, in standard boson sampling, showing the hiding property boils down to showing that the distribution of any small enough sub-matrix of a Haar-random unitary is approximately (in total-variation distance) an entry-wise complex normal distribution.
This implies that all collision-free outcomes are (approximately) equally distributed. 
In particular, Aaronson and Arkhipov show that when $M \in \omega(N^{5})$, we can ``hide'' a random Gaussian matrix in a small enough sub-matrix of the large Haar-random unitary by an appropriate procedure \cite{Aaronson2013} because all of these sub-matrices are indistinguishable from random Gaussian matrices.

For the approximate sampling task to remain computationally intractable, it remains to show that estimating the outcome probabilities up to inverse-exponentially small error is $\#\mathsf{P}$-hard for any large-enough fraction of the problem instances---a property called \emph{approximate average-case hardness}.
More precisely, given a random problem instance, approximating the probability of a given outcome must be $\#\mathsf{P}$-hard with high probability.
As evidence toward this property, it has been shown that \emph{exactly} computing those output probabilities is in fact $\#\mathsf{P}$-hard on average (and this was a motivation for boson sampling in the first place), and it is known that estimating them to the required robustness level is worst-case hard.
However, the hardness of computing those probabilities to a sufficiently large robustness level on average is still unknown.

We now state our results concerning the hardness of general GBS, followed by our proposal for an architecture to perform high-dimensional GBS.

\section*{Results}
\subsection*{Hiding for arbitrarily many squeezers in GBS } \label{sec_hiding}

As mentioned above, the property of hiding in boson sampling can be translated into a property of the distribution of sub-matrices of random linear-optical unitaries chosen from some distribution.
We will now show that a similar property about the distributions of sub-matrices occurring in the evaluation of outcome probabilities also holds in GBS, provided a plausible random-matrix theory conjecture holds.
We focus on the paradigmatic setting in which the linear-optical unitary is drawn from the Haar measure, and we fix the input state to be such that the first $K$ out of $M$ modes are prepared in single-mode squeezed states with identical squeezing parameter $r$, and the remaining $M-K$ modes are prepared in the vacuum state.
Furthermore, we restrict to \emph{collision-free} outcomes $\mathbf{n}$ for which $n_i \in \{0,1\}$, giving rise to a total photon number $N = \sum_{j=1}^M n_j$.
The probability of obtaining such an outcome $\mathbf{n}$ can be written as
\[ \label{eq:GBS_probs_cf}
 \Pr(\mathbf{n}) = \frac{ \tanh^{N} (r)}{\cosh^K (r)} \left|\mathrm{Haf}\left[ \left(U I_K U^T\right)_{\mathbf{n},\mathbf{n} } \right] \right|^2.
\]
Here $I_K = \mathds{1}_K \oplus 0_{M-K}$ denotes the matrix where $\mathds{1}_K$ is a $K$-dimensional identity matrix, $0_{M-K}$ is an $M-K$-dimensional all-zero matrix, and as before, the notation $A_{\mathbf{n}}$ stands for the sub-matrix of $A$ corresponding to the entries of $\mathbf{n}$ (see below Eq.~(1)).
The task of estimating output probabilities of GBS hence corresponds to estimating ${|\mathrm{Haf}( (U I_K U^T)_{\mathbf{n},\mathbf{n}})|}^2$.

To show the GBS hiding property, we need to characterize the distribution of matrices $(U I_K U^T)_{\mathbf{n},\mathbf{n}}$---of which the Hafnian is computed---as induced by the Haar-random choice of $U$ and depending on the scaling relations between $K,N,M$.
To ensure that for every choice of $K$ we can restrict to collision-free outcomes, we choose the squeezing parameter $r$ such that the average photon number $\mathbb{E} [N] = K \cdot \sinh^2 r  \in o(\sqrt M)$~\cite{Aaronson2013}.
This condition ensures that the collision-free outcomes dominate the probability weight. 

Here, we formulate the hiding property in GBS in terms of random matrix theory and provide strong numerical and analytical evidence that it holds regardless of the fraction of squeezed input modes so long as the collision-free condition is satisfied.
Observe that the matrix $(U I_K U^T)_{\mathbf{n},\mathbf{n}} = U_{\mathbf{n},1_K} U_{\mathbf{n},1_K}^T$ can be expressed in terms of the sub-matrix $U_{\mathbf{n} ,1_K} $ of $U$ obtained by choosing rows according to $\mathbf{n}$ and the first $K$ columns. 
To show the hiding property, we need to relate this distribution over matrices to the distribution of the symmetric product $XX^T$ of a complex Gaussian $N \times K$ matrix $X$ with mean $0$ and variance $1/M$, denoted as $X \sim \mathcal G_{N,K}(0,1/M)$.
We provide analytical and numerical evidence for the conjecture that these distributions are indistinguishable for any number of squeezers $K$ satisfying $N \leq K \leq M$. 

\begin{conj}[Hiding in GBS (informal)]
\label{conj:hiding informal}
For any $K$ such that $N \leq K \leq M$ and $N \in o(\sqrt{M})$, the distribution of the symmetric product $U_{\mathbf n, 1_K}U^T_{\mathbf n, 1_k}$ of sub-matrices of a Haar-random $U \in U(M)$ closely approximates the distribution of the symmetric product $XX^T$ of a Gaussian matrix $X \sim \mathcal G_{N,K}(0,1/M)$ in total-variation distance. 
\end{conj}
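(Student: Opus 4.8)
The plan is to treat both sides as distributions over the $\binom{N+1}{2}$ independent entries of a complex-symmetric $N\times N$ matrix, writing $Z_{ab}=\sum_{k=1}^{K}U_{n_a,k}U_{n_b,k}$ for the Haar side and $W_{ab}=\sum_{k=1}^{K}X_{ak}X_{bk}$ for the Gaussian side, and to show $\|\mathrm{law}(Z)-\mathrm{law}(W)\|_{\mathrm{TV}}\to 0$. First I would dispose of the base case $K=N$: there $U_{\mathbf n,1_K}$ is a square $N\times N$ truncation of $U$, which is known to converge to $X\sim\mathcal G_{N,N}(0,1/M)$ in total-variation distance in the Aaronson--Arkhipov regime $N\in o(\sqrt M)$; since $Z=U_{\mathbf n,1_K}U_{\mathbf n,1_K}^T$ and $W=XX^{T}$ are the same deterministic quadratic image of the respective matrices, the data-processing inequality for total-variation distance gives the claim for free. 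The genuinely new content is the large-$K$ regime up to $K=M$, where the truncation itself is \emph{not} close to Gaussian---its rows become orthonormal as $K\to M$---yet the symmetric (unconjugated) product can still match $XX^{T}$ because the product averages over $K$ weakly-correlated terms.

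To exploit this, I would build a moment-matching engine. Each mixed moment $\mathbb E[\prod_i Z_{a_ib_i}\prod_j \overline{Z_{c_jd_j}}]$ is a balanced monomial in the entries of $U$ and $\overline U$, evaluated by the Weingarten formula, whereas the corresponding moment of $W$ is evaluated by Wick's theorem. The key structural fact is that the leading term of the Weingarten function reproduces exactly the Gaussian (Wick) pairings, with the variance $1/M$ chosen precisely so that $\mathbb E|U_{ak}|^2=\mathbb E|X_{ak}|^2$. A direct computation already shows the second moments agree to leading order \emph{even at} $K=M$---for instance $\mathbb E|Z_{ab}|^2=1/(M+1)$ versus $\mathbb E|W_{ab}|^2=K/M^2$, both $\sim 1/M$---and I would show inductively that the subleading Weingarten corrections, together with the ``almost-orthonormality'' corrections encoded by the Wishart-type factor $(G^\dagger G)^{-1/2}$ relating the Stiefel frame to a Gaussian matrix, are suppressed by powers of $1/M$ with combinatorial prefactors in $N,K$ that remain controlled under $N\in o(\sqrt M)$ and $N\le K\le M$. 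This yields agreement of all moments and hence convergence in distribution, as well as in weaker metrics such as Wasserstein or characteristic-function distance.

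The hard part---and the reason the statement is posed as \Cref{conj:hiding informal} rather than proved as a theorem---is the final step of upgrading matching moments to closeness in \emph{total-variation} distance. Moment (or weak) convergence does not imply total-variation convergence, and the natural route of comparing probability densities breaks down in exactly the regime we care about: the truncation density $\propto\det(\mathds 1_N-U_{\mathbf n,1_K}^\dagger U_{\mathbf n,1_K})^{M-N-K}$ is only valid for $M\ge N+K$ and degenerates as $K\to M$, and in any case we need the pushforward density of the quadratic map onto $XX^{T}$, which is not of a standard form. Controlling the dependence structure that the unitarity constraint imprints on the entries of $Z$---strongly enough to bound an $L^1$ difference of densities uniformly for all $K$ up to $M$, rather than merely matching moments---is the essential obstacle. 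I would therefore expect a complete proof to require either an explicit density for the complex-symmetric Wishart-type law of $XX^{T}$ matched against a local limit theorem for $Z$, or a smoothing/coupling argument, and absent such a tool the moment analysis of the previous paragraph together with numerical checks constitutes the analytical evidence for the conjecture.
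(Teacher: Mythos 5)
Since the statement you are addressing is posed in the paper as a conjecture, there is no proof to reproduce: the paper's own treatment (in its supplementary section on hiding) consists of analytical evidence in two extremal regimes plus numerics, and your proposal is correctly calibrated the same way, since you explicitly identify the upgrade from moment/weak convergence to total-variation closeness as the obstacle that keeps this a conjecture. Your evidence route, however, is genuinely different from the paper's. For the base case $K=N$ you invoke total-variation closeness of the square truncation to a Gaussian matrix and push it through the quadratic map by data processing; the paper does the same in spirit, but only in the regime $M \in \Omega(K^5\log^2 K)$ covered by Aaronson--Arkhipov, whereas you claim the full range $N \in o(\sqrt M)$ and attribute it to Aaronson--Arkhipov. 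That attribution is off: their theorem needs $M = \omega(N^5)$, and the sharp $N^2 = o(M)$ threshold for truncation-to-Gaussian total-variation convergence is due to Jiang and Jiang--Ma (proved for the orthogonal group, so the unitary analogue needs care). For large $K$ your Weingarten-versus-Wick moment-matching program is self-contained and uniform in $K$, which is its main advantage, but it delivers only moment convergence; the paper instead invokes Jiang's 2009 result on COE sub-matrices at $K=M$, which gives the stronger entrywise convergence in probability, and covers intermediate $K$ numerically by comparing singular-value spectra --- exploiting the observation that both ensembles are invariant under conjugation $C \mapsto VCV^T$, so the singular-value distribution is a sufficient statistic for the comparison, a dimension reduction your scheme lacks. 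In short: the paper's route buys stronger conclusions at the extremes with no new analysis but leaves the middle purely numerical; your route would treat all $K$ in one analytical framework, at the price of substantial unwritten combinatorial control over the subleading Weingarten terms, and, as you yourself note, even a complete execution of it would not close the conjecture.
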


We provide a formal statement of the conjecture in the Supplementary Material. 
There, we also discuss regimes in which the conjecture is known to be partially true \cite{Aaronson2013,jiang_entries_2009} and provide numerical evidence for it.
Proving this conjecture is an open research problem in random matrix theory.

Conjecture 1 characterizes the distribution of the symmetric product of $N\times K$ sub-matrices of Haar-random unitaries. 
In turn, the Hafnian of such symmetric products determines the output distribution of GBS. 
While in standard boson sampling, the hiding property amounts to hiding a small $N \times N$ Gaussian matrix in a large $M \times M$ Haar-random unitary matrix, in GBS it amounts to hiding a small $N\times N$ symmetric Gaussian matrix $XX^T$ in a large symmetric unitary matrix $U I_K U^T$ for any $K \geq N$.
This means that any particular sub-matrix cannot be distinguished from any other such sub-matrix of the same size, enforcing the constant error budget to be roughly equally distributed across all outcomes.

In particular, the conjecture implies that the hiding property can be achieved with any number $K$ of input squeezers as long as the average total photon number is sufficiently small. 
In turn, the average total photon number is determined by the total amount of squeezing across all input squeezers.
Intuitively, this is due to the fact that the output of a Haar-random unitary does not depend on any fixed input state.
In fact, the average output state is a product of identical thermal states whose average photon number is determined by the total input squeezing.
Importantly, however, the number $K$ is still crucial for the estimation task as it determines the rank of the matrix $(U I_K U^T)_{\mathbf n , \mathbf n}$. 
Since the complexity of computing the Hafnian of a matrix depends on the rank of that matrix~\cite{bjorklund2019faster}, $K$ should be chosen such that it is at least $N$.
Note that the USTC experiment~\cite{Zhong2020} used $K = M/2$ many squeezers, so our results are directly applicable there, strengthening the arguments for their QCA demonstration.

More generally, we consider three regimes of interest, and provide evidence for Conjecture 1 in the Supplementary Material. 
First, the highly sparse regime in which the total number of modes scales as $M= \omega(K^5)$ and the number of photons is equal to the number of squeezers, $N=K$, features provable hiding results due to Ref.~\cite{Aaronson2013}.
Realistic experiments and proposals today operate in the regime $K = c M$, meaning that a constant fraction $c$ of the input modes is squeezed.
In this regime, the result of Ref.~\cite{jiang_entries_2009} provides analytical evidence for hiding in the asymptotic limit as long as the input squeezing is such that $N \in o(\sqrt{M}/\log M)$.
Lastly, we also consider the intermediate regime of how $M$ scales with $K$ between these two extremes, and give numerical evidence for hiding in this general case.

Let us note that we do not expect Conjecture~1 to hold for large $N \in \omega(\sqrt{M})$. Indeed, in this case it is known that hiding fails for standard boson sampling~\cite{jiang_how_2006,Jiang2017}. 

\subsection*{Average-case hardness of computing GBS probabilities} \label{sec_avgcaseGBS}
As outlined earlier, the question of hardness of approximate sampling boils down to whether it is $\#\mathsf{P}$-hard to approximate most output probabilities.
We now show the average-case hardness of this task when the allowed additive approximation error is exponentially small, using techniques from Ref.~\cite{bouland2021noise}.

We have established that the output probabilities of GBS are given in terms of $\left|\mathrm{Haf}( (U I_K U^T)_{\mathbf{n},\mathbf{n} } )  \right|^2$. By virtue of the previous discussion and more precisely, Conjecture~1, the distribution over the $N\times N$ matrices $(U I_K U^T)_{\mathbf{n},\mathbf{n} }$ for Haar random $U$ is well approximated by complex, symmetric Gaussian  matrices $XX^T$. Hence, to show the average-case hardness of computing output probabilities of GBS, it suffices to consider the following problem:

\begin{problem}{$(\delta,\epsilon)$-Squared-Hafnians-of-Gaussians}
Input & A matrix $XX^T$ with $X \sim \mathcal G_{N,K}(0,1/M)$.
\\
Output & ${|\mathrm{Haf}{(XX^T)}|}^2$ to additive error $\epsilon$, with probability $\geq \delta$ over the distribution $\mathcal G_{N,K}(0,1/M)$.
\end{problem}

To complete the argument that an efficient classical approximate sampling algorithm for GBS cannot exist, it remains to prove the \#P-hardness of \textsc{$(\delta,\epsilon)$-Squared-Hafnians-of-Gaussians} as formalized by the following approximate average-case hardness conjecture.

\begin{conj} \label{conj_outputprob}
The \textsc{$(\delta,\epsilon)$-Squared-Hafnians-of-Gaussians} problem is $\# \mathsf{P}$-hard for any
$\epsilon=O\left({N! \tanh^{N}(r)}/{(\cosh^K(r) M^N)}\right)$
and any constant $\delta > 3/4$.
\end{conj}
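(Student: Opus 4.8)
The plan is to establish the $\#\mathsf{P}$-hardness of \textsc{$(\delta,\epsilon)$-Squared-Hafnians-of-Gaussians} through a \emph{robust worst-case-to-average-case reduction}, following the template of Aaronson and Arkhipov for the permanent~\cite{Aaronson2013} and its noise-robust refinements in Ref.~\cite{bouland2021noise}. First I would take as given the approximate worst-case hardness of computing $|\mathrm{Haf}(XX^T)|^2$ to additive error $\epsilon$, which the paper establishes separately, so that it suffices to convert an oracle succeeding on a $\delta > 3/4$ fraction of Gaussian inputs into one that succeeds on an arbitrary worst-case instance. The central structural observation is that $\mathrm{Haf}(XX^T)$ is a homogeneous polynomial of degree $N$ in the entries of $X$ (the Hafnian contributes degree $N/2$ and each entry of $XX^T$ is quadratic in $X$), so that $|\mathrm{Haf}(XX^T)|^2$ is a real polynomial of degree at most $2N$ in the real and imaginary parts of the entries of $X$. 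Since $K \geq N$, an Autonne--Takagi factorization writes any symmetric target matrix as $X^\star (X^\star)^T$, so worst-case hardness over symmetric matrices transfers to worst-case targets $X^\star$ in the factorized variables.

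Given such a target $X^\star$, I would draw a fresh $Y \sim \mathcal{G}_{N,K}(0,1/M)$ and interpolate along a path $X(t)$ with $X(0) = Y$ and $X(1) = X^\star$, chosen so that for small $t$ the matrix $X(t)$ stays close in total-variation distance to $\mathcal{G}_{N,K}(0,1/M)$. Querying the oracle on a grid of small parameters $t_i$, each query lands in the oracle's good set with probability at least $\delta$ minus a total-variation correction, producing a noisy sample of the univariate polynomial $q(t) = |\mathrm{Haf}(X(t)X(t)^T)|^2$ of degree at most $2N$. I would then recover $q(1) = |\mathrm{Haf}(X^\star (X^\star)^T)|^2$ by robust extrapolation to $t=1$, using a noise-tolerant low-degree interpolation scheme (a Berlekamp--Welch-type decoder adapted to noisy evaluations) that simultaneously absorbs the additive error $\epsilon$ on good evaluations and the constant fraction of failed evaluations permitted by $\delta > 3/4$. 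Matching the allowed error $\epsilon = O\!\left(N!\tanh^N(r)/(\cosh^K(r) M^N)\right)$ to the anticoncentration scale of typical GBS probabilities is what makes the resulting reduction meaningful for the approximate-sampling argument.

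The hard part will be the extrapolation step. Robust interpolation of a degree-$2N$ polynomial from data confined to a short interval near $t=0$ (where $X(t)$ remains near-Gaussian) out to $t=1$ amplifies the per-point additive error by a factor that grows roughly like $(1/\eta)^{2N}$ in the interval length $\eta$, and this blow-up generically overwhelms the exponentially small budget $\epsilon$; controlling it is precisely the obstacle that keeps the analogous Permanent-of-Gaussians statement conjectural in Ref.~\cite{Aaronson2013}. A natural attempt to improve the error scaling would be to replace the linear path by a Cayley-type path or a truncated-Taylor-series interpolation in the spirit of Ref.~\cite{bouland2021noise}, tightening the dependence of the distribution on $t$ and hence the placement of usable evaluation points. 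However, even these techniques currently yield hardness only for error that is quasi-polynomially small rather than the additive $\epsilon$ fixed by anticoncentration, so closing this final gap is exactly the part I do not expect to resolve, which is why the statement is posed as a conjecture rather than proved as a theorem.
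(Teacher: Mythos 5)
Your proposal takes essentially the same route as the paper: the statement is posed as a conjecture precisely because the reduction you outline --- Takagi factorization to get worst-case hardness of $\left|\mathrm{Haf}(YY^T)\right|^2$, linear interpolation $X(t)=(1-t)X+tY$ from a Gaussian sample, and the robust Berlekamp--Welch decoder of Ref.~\cite{bouland2021noise} --- is exactly what the paper uses to prove only the weaker Theorem 3, and you correctly identify the error-amplification of the extrapolation step as the obstruction that keeps the full conjecture open. One correction: the amplification factor $\exp[d\log\kappa^{-1}+O(d)]$ with $d=2N$ and $\kappa=O(1/K^2)$ yields achievable robustness $\epsilon \leq O\left(\exp[-6N\log N-\Omega(N)]\right)$, which is \emph{exponentially} small (indeed more stringent than the conjectured $\epsilon = O\left(N!\tanh^N(r)/(\cosh^K(r)M^N)\right) = O\left(\exp[-N\log N - \Omega(N)]\right)$ for $M=\Theta(N^2)$), not ``quasi-polynomially small''; the gap is a constant factor multiplying $N\log N$ in the exponent, though your identification of where the argument falls short is exactly right.
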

A proof of Conjectures 1 and 2 would imply that approximate sampling from a random, general GBS instance, is hard on average.
Let us see how.
Assume that there exists a classically efficient sampler $O$ that samples from a associated distribution whose output probability for outcome $i$ is given by $q_i$.
From the promise that this distribution is $\varepsilon$-close in total variation distance to the target distribution, we have $\sum_i |{p_i - q_i}| \leq 2\varepsilon$, where $p_i$ is the corresponding output probability of the target distribution.
Choose a photon number $N$ so that Conjecture 1 is satisfied.
Among the space of all outcomes with $N$ total photons, for a randomly chosen outcome $i$, we have:
\begin{align}
\Pr_i \left[|{p_i - q_i}| \leq \frac{2\varepsilon k}{\binom{M+N-1}{N}}\right] \geq 1 - \frac{1}{k}.
\end{align}
Assuming Conjectures 1 and 2, with probability at least 3/4, $p_i$ is $\# \mathsf{P}$-hard to compute to additive error $\epsilon' = O\left( \frac{N!}{M^N} \right)$.
Therefore, with probability at least $3/4 (1-1/k)$, it is also $\# \mathsf{P}$-hard to compute $q_i$ to within error $\epsilon' + \frac{2\varepsilon k}{\binom{M+N-1}{N}} = O\left(\exp[-N\log N - \Omega(N)]\right)$ assuming $M=\Theta(N^2)$.
On the flip side, the Stockmeyer algorithm \cite{stockmeyer_complexity_1983} allows us to compute the output probability of an arbitrary outcome $q_i$ to within inverse-multiplicative polynomial precision.
Further, by the Markov inequality, most outcomes $q_i$ cannot be much larger than $1/\binom{M+N-1}{N}$:
\begin{align}
\Pr_i \left[ q_i > \frac{l}{\binom{M+N-1}{N}} \right] \leq \frac{\Pr(N)}{l} \leq \frac{1}{l},
\end{align}
where the quantity $\Pr(N)$ is the probability of seeing $N$ total photons.
This means that with probability at least $1-1/l$, $q_i$ can be computed to additive error $O\left(l \exp[-N\log N - \Omega(N)]\right)$ using a $\mathsf{BPP}^{\mathsf{NP}^O}$ machine running the Stockmeyer algorithm.
Therefore, setting $l = 4k$, a $\mathsf{PH}$ algorithm can solve with high probability a problem that is average-case $\#\mathsf{P}$-hard.
This collapses the polynomial hierarchy.

Note that since we have phrased Conjecture 2 in terms of additive error instead of multiplicative error, we do not explicitly need an anticoncentration condition of the form  $\Pr_X\left[p_0 \geq \binom{M+N-1}{N}^{-1} \right]$ $\geq \gamma$ for some constant $\gamma > 0$, as is often conjectured for permanents \cite{Aaronson2013}.
Nevertheless, it is possible that Conjecture 2 already implies a weak form of  anticoncentration.
Informally, an anticoncentration condition states that on a large fraction of the instances the output probabilities are large enough so that a trivial algorithm for computing the probabilities that outputs ``0'' is not sufficient to solve the \textsc{$(\delta,\epsilon)$-Squared-Hafnians-of-Gaussians} problem.
This is because in order for Conjecture 2 to be true, it is necessary for the trivial algorithm to fail with high probability.

As in all other known proposals for demonstrating QCA, this approximate average-case hardness conjecture remains open.
Nonetheless, just like in other proposals, it turns out that one can give evidence for Conjecture 2.
Namely, we can prove a weaker version of the conjecture with a smaller robustness level $\epsilon = O\left(\exp[-6N\log N - \Omega(N)]\right)$ as opposed to $\epsilon = O\left(\exp[-N\log N - \Omega(N)]\right)$ in Conjecture 2.

\begin{thm}
\label{lem:exact average case hardness}
The \textsc{$(\delta,\epsilon)$-Squared-Hafnians-of-Gaussians} problem is $\#\mathsf{P}$-hard under $\mathsf{PH}$ reductions for any $\epsilon \leq O\left(\exp[-6N\log N - \Omega(N)]\right)$ and any constant $\delta > 3/4$.
\end{thm}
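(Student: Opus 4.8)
The plan is to prove this by a worst-case-to-average-case reduction built on robust polynomial interpolation, following the machinery of Ref.~\cite{bouland2021noise}. The starting point is the worst-case $\#\mathsf{P}$-hardness of the Hafnian: computing $\mathrm{Haf}(A)$ for an arbitrary symmetric matrix $A$ is $\#\mathsf{P}$-hard, since for $0/1$ adjacency matrices it counts perfect matchings. Two observations recast this in the required form. First, by a Takagi/Autonne-type factorization any symmetric $N\times N$ matrix $A$ can be written as $A = YY^T$ for some $N \times K$ matrix $Y$ with $K \ge N$ (padding by zeros), so worst-case hardness of $\mathrm{Haf}$ transfers to worst-case hardness of computing $|\mathrm{Haf}(YY^T)|^2$ over such factorizations; and since $\mathrm{Haf}$ of a $0/1$ matrix is a nonnegative integer, recovering $|\mathrm{Haf}|^2$ determines $\mathrm{Haf}$ itself. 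It therefore suffices to show that an oracle $O$ solving \textsc{$(\delta,\epsilon)$-Squared-Hafnians-of-Gaussians} can be bootstrapped, inside a $\mathsf{PH}$ (indeed $\mathsf{BPP}^{\mathsf{NP}^O}$) reduction, to compute this worst-case quantity.

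Next I would set up the interpolating path. Fixing the target factor $Y$, I draw $X \sim \mathcal{G}_{N,K}(0,1/M)$ and pass to the convex-combination family $X(s) = (1-s)X + sY$, so that $s=0$ is exactly the target Gaussian instance while $s=1$ recovers the worst-case $Y$. For each fixed realization, $g(s) := |\mathrm{Haf}(X(s)X(s)^T)|^2$ is a polynomial in $s$: since $\mathrm{Haf}(X(s)X(s)^T)$ has degree $N$ in the entries of $X(s)$, together with the complex conjugate coming from the modulus, $g$ has degree $2N$. The key distributional fact is that for $s$ in a small window around $0$, the law of $X(s)$---a Gaussian with shifted mean $sY$ and scaled covariance $(1-s)^2/M$---is close in total-variation distance to $\mathcal{G}_{N,K}(0,1/M)$, so a query to the oracle at such an $s$ lands in its $\delta$-fraction ``good set'' with probability $\approx \delta > 3/4$ and then returns $g(s)$ to additive error $\epsilon$.

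With these ingredients the reduction proceeds by robust interpolation. I query $O$ at $\Theta(N)$ points $s_1,\dots,s_m$ inside the good window; by $\delta > 3/4$ and the total-variation closeness above, at least a $(3/4 - o(1))$-fraction of the returned values are $\epsilon$-accurate samples of the degree-$2N$ polynomial $g$, while the remaining $(\le 1/4 + o(1))$-fraction may be adversarial. Because fewer than half the points are corrupted, a Berlekamp--Welch-style decoder---augmented with the $\mathsf{NP}$ oracle to select the correct low-degree fit---recovers a polynomial $\tilde g$ agreeing with $g$ up to a controlled error, and I then extrapolate to $s=1$ to read off $|\mathrm{Haf}(YY^T)|^2$. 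Randomizing $X$ makes the whole procedure a $\mathsf{BPP}^{\mathsf{NP}^O}$ algorithm computing the worst-case $\#\mathsf{P}$-hard quantity $|\mathrm{Haf}(YY^T)|^2$; this is exactly a $\mathsf{PH}$ reduction establishing the claimed $\#\mathsf{P}$-hardness of \textsc{$(\delta,\epsilon)$-Squared-Hafnians-of-Gaussians} at the stated robustness.

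The main obstacle---and the source of the weaker exponent---is propagating the per-query additive error $\epsilon$ through the extrapolation while simultaneously keeping every query distribution total-variation-close to the true Gaussian. Extrapolating a degree-$2N$ polynomial from points clustered near $s=0$ out to $s=1$ amplifies errors by the Lagrange factors, which grow like $\exp[\Theta(N\log N)]$; on top of this, to widen the admissible window one rescales the $O(1)$ worst-case entries down to the Gaussian scale $1/\sqrt{M}=1/\Theta(N)$, and undoing this rescaling in $|\mathrm{Haf}|^2$ contributes a further factor $M^{\Theta(N)} = \exp[\Theta(N\log N)]$. Balancing the window width (to bound the total-variation distance and hence keep the oracle's success probability above $3/4$) against these two amplification factors is precisely the quantitative analysis of Ref.~\cite{bouland2021noise}, and carefully tracking the accumulated constants yields the robustness $\epsilon \le O(\exp[-6N\log N - \Omega(N)])$---a factor-$6$ gap from the $\exp[-N\log N-\Omega(N)]$ that Conjecture~2 would require.
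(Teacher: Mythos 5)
Your proposal is correct and follows essentially the same route as the paper's proof: worst-case hardness via Takagi factorization and zero-padding of a hard symmetric $0/1$ instance with rescaled entries (the paper anchors this in the permanent via the block-matrix identity $\mathrm{Per}(G)=\mathrm{Haf}\bigl(\begin{smallmatrix}0 & G\\ G^T & 0\end{smallmatrix}\bigr)$, which is equivalent to your perfect-matching formulation), followed by the linear interpolation $X(t)=(1-t)X+tY$, total-variation control of the query window of width $\kappa=O(1/K^2)$ to keep the oracle's success probability above $3/4$, and the $\mathsf{NP}$-boosted robust Berlekamp--Welch extrapolation, with the identical error accounting $4N\log N$ (extrapolation) plus $2N\log N$ (the $1/M^N$ worst-case threshold) yielding the $6N\log N$ exponent. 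The only cosmetic slip is your query count of $\Theta(N)$ points: the robust Berlekamp--Welch theorem as used in the paper requires $k\geq 100d^2=\Theta(N^2)$ uniformly spaced points, which is still polynomial and changes nothing in the argument.
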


We provide a detailed proof of Lemma~3 in the Supplementary Material.
The technique we employ in the proof is a worst-to-average-case reduction~(see, e.g. \cite{Aaronson2013}). That is, by assuming access to an oracle  for the \textsc{$(\delta,\epsilon)$-Squared-Hafnians-of-Gaussians} problem, we show that one in fact approximate $\mathrm{Haf}( XX^T )$ for any matrix $X \in \mathbb{C}^{N \times K}$. This latter task is $\#\mathsf{P}$-hard in the worst-case as we show in the Supplementary Material.
At a high level, the worst-to-average-case reduction relies on the fact that ${|\mathrm{Haf}{(XX^T)}|}^2$ is a low degree (of degree $2N$) polynomial over the entries of the matrix $X$. This allows us to use the oracle to perform polynomial interpolation.
Therefore, by combining this observation with the techniques of Refs.~\cite{Aaronson2013, bouland2021noise,kondo2021fine-grained}, we obtain a worst-to-average case reduction for exactly computing the output probabilities.

Together, our results on the hiding property and the approximate average-case conjecture in GBS, strengthen the evidence for the hardness of approximately simulating GBS in terms of the total-variation distance to the ideal output distribution. 
Given our results, GBS is now on par with the other leading QCA proposals in terms of complexity-theoretic evidence for approximate sampling hardness \cite{Aaronson2013,Boixo2018,Bouland2019,Movassagh2019,bouland2021noise,kondo2021fine-grained}, up to a plausible conjecture in random matrix theory---for which we provided theoretical and numerical evidence. 
To achieve a demonstration covered by those complexity-theoretic results, however, the loss rate at every element of the linear-optical circuit, must scale inversely with the total number of such elements---a daunting challenge from an experimental perspective.

\subsection*{Hardness of computation of output probabilities for noisy GBS} \label{Sec:noisy}

We now go one step further and assess how the complexity-theoretic argument for sampling hardness is affected by more realistic noise levels, in particular, in terms of photon loss.
In terms of scaling, any constant loss rate of the individual optical elements can lead to the output distribution rapidly approaching a classical distribution.
We now show that, nonetheless and surprisingly, an evidence of a quantum signal remains even in the presence of significant loss.
We then discuss to what extent and in which regimes such a quantum signal might lead to the hardness of simulating a lossy GBS experiment.

One of our main results is the average-case hardness of computing the noisy output probability of a random GBS instance, which we obtain by using similar arguments to recent work of Bouland \emph{et al}.~\cite{bouland2021noise}, but now extended to the GBS setting.
Our results are valid for any noise model that is local, stochastic, and is error-detectable using linear optics.
More specifically, we consider a setting where the noise acts locally after every gate, and is of the form
\begin{align}
\mathcal{N}_i[\rho] = (1-\eta_i)\rho + \eta_i \mathcal{E}_i[\rho], \label{eq_noisechannel}                                                                                        \end{align}
where stochasticity requires $\mathcal{E}_i$ to itself be a valid channel (i.e.\ a completely positive trace preserving map) with no identity component.

Consider the following problem.
\begin{problem}{$(\epsilon,\eta)$-NoisyGBS-Probability}
Input & A noisy GBS instance, consisting of the linear-optical unitary $U$ on $M$ modes chosen from the Haar measure $\mathcal{H}$, the squeezing parameters at the input, a description of the noise channels with parameters $\eta_i$, and a description of a collision-free outcome $\mathbf
{n}$ with $N=\mathsf{poly}(M)$ total photons.
Let $\eta=\max_i \eta_i$.\\
Output & With probability $\delta$ over instances, an estimate of the quantity $\Pr(\mathbf{n})$ to additive error $\epsilon$, where $\Pr(\mathbf{n})$ is the probability of obtaining outcome $\mathbf{n}$.\\
& With probability $1-\delta$, an arbitrary output.
\end{problem}
In the above definition, we take $\delta=1$ to mean the worst-case problem.
We prove the following statement of average-case hardness of computing noisy probabilities.
\begin{thm} \label{lem_avgcasehardnoisy}
There exists a noise threshold $\eta_*$ and a sufficiently large polynomial such that the problem \textsc{$(\epsilon,\eta)$-NoisyGBS-Probability} is $\#\mathsf{P}$-hard under $\mathsf{PH}$ reductions for any constant $\delta > 3/4$, $\eta \leq \eta_*$, and $\epsilon \leq 2^{-\mathsf{poly}}$.
\end{thm}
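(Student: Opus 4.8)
The plan is to prove Theorem~\ref{lem_avgcasehardnoisy} by a noise-robust worst-to-average-case reduction that augments the reduction behind Theorem~\ref{lem:exact average case hardness} with an extra ``noise dimension,'' following the strategy of Bouland \emph{et al.}~\cite{bouland2021noise}. The hard problem I would reduce from is the worst-case $\#\mathsf{P}$-hardness of approximating $|\mathrm{Haf}(YY^T)|^2$ for arbitrary $Y$, which is established in the Supplementary Material. Assuming an oracle $\mathcal{O}$ for \textsc{$(\epsilon,\eta)$-NoisyGBS-Probability}, I would show that $\mathcal{O}$ lets one recover this worst-case quantity, so that a classical solver for the noisy problem would collapse the polynomial hierarchy.

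First I would exhibit the polynomial structure of the noisy probability in the noise rate. Writing each local channel as $\mathcal{N}_i = (1-\eta_i)\,\mathrm{id} + \eta_i \mathcal{E}_i$ and setting every $\eta_i = \eta$, expanding over which channels act as the identity versus as $\mathcal{E}_i$ shows that the noisy outcome probability $\tilde p(\eta)$ is a univariate polynomial in $\eta$ of degree $D$ equal to the number of noise locations, i.e.\ $D = \mathsf{poly}(M)$. Its zero-noise value $\tilde p(0)$ is exactly the ideal probability $\tanh^N(r)\cosh^{-K}(r)\,|\mathrm{Haf}((UI_KU^T)_{\mathbf n,\mathbf n})|^2$, and as in~\cite{bouland2021noise} the stochastic, error-detectable form of the noise is what makes this a genuine low-degree polynomial amenable to extrapolation. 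Thus the $\#\mathsf{P}$-hard quantity sits at the constant term of a polynomial that $\mathcal{O}$ lets us sample at admissible nonzero noise levels $\eta \le \eta_*$.

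Next I would splice this together with the matrix-path interpolation of Theorem~\ref{lem:exact average case hardness}. There, to reach a worst-case target one runs a low-degree curve $X(t)$ through matrix space whose interior query points are (approximately) $\mathcal G_{N,K}(0,1/M)$-distributed and which reaches $Y$ at some $t^*$; along it $|\mathrm{Haf}(X(t)X(t)^T)|^2$ is a polynomial of degree $O(N)$ in $t$. The combined object $P(t,\eta) = \tilde p_{X(t)}(\eta)$ is then a bivariate polynomial of polynomial degree in each variable whose target value is $P(t^*,0)$. The reduction queries $\mathcal{O}$ on a grid $\{(t_i,\eta_j)\}$ of polynomially many points, each $t_i$ giving an average-case-distributed matrix and each $\eta_j \in (0,\eta_*]$ an admissible noise level, reconstructs $P$ by robust polynomial interpolation, and evaluates at $(t^*,0)$.

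The main obstacle, and where the hypotheses $\delta>3/4$, $\eta\le\eta_*$, and $\epsilon\le 2^{-\mathsf{poly}}$ are all consumed, is making this reconstruction robust both to the oracle's $\epsilon$ errors and to the fact that $\mathcal{O}$ succeeds on only a constant fraction $\delta$ of instances, with a bad set that may depend on $\eta$. For fixed base randomness the oracle's grid answers are either $\epsilon$-accurate or adversarial, so I would decode with a robust Berlekamp--Welch procedure tolerating a $1/4$ fraction of corrupted evaluations alongside $\epsilon$-perturbations on the rest. Keeping the corrupted fraction below $1/4$ is a Markov estimate: by linearity the expected fraction of corrupted grid points is at most $1-\delta$, so fewer than a quarter are corrupted with probability at least $1-4(1-\delta)$ over the base randomness, which is positive exactly when $\delta>3/4$. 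Because we compute a single worst-case value, decoding failures can be detected and the reduction repeated with fresh randomness to amplify, yielding a randomized ($\mathsf{PH}$) reduction as in Theorem~\ref{lem:exact average case hardness}. Finally, since evaluating at $t^*$ and at $\eta=0$ both lie outside the cluster of query points, these are \emph{extrapolations} rather than interpolations, so the Lagrange coefficients grow exponentially in $D$ and $O(N)$; bounding this amplification is what forces $\epsilon \le 2^{-\mathsf{poly}}$ and fixes the threshold $\eta_*$ so that extrapolation from the accessible noise window down to $\eta=0$ stays well-conditioned. I expect the delicate part to be controlling this two-variable conditioning while keeping all degrees and the noise window consistent with a single inverse-exponential error bound.
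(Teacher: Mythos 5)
There is a genuine gap, and it sits exactly where the paper is forced to change techniques relative to Theorem~\ref{lem:exact average case hardness}. Your bivariate object $P(t,\eta)=\tilde p_{X(t)}(\eta)$ is not well defined: the noisy output probability is a functional of the \emph{circuit implementation} (which gates are applied and where the local channels $\mathcal N_i$ act), not of the matrix $X$ or of $XX^T$ alone. Moreover, the interior points of the linear path $X(t)=(1-t)X+tY$ are not unitary, so they do not correspond to any noisy GBS instance at all; unlike the oracle for \textsc{$(\delta,\epsilon)$-Squared-Hafnians-of-Gaussians}, which accepts arbitrary matrices, the oracle for \textsc{$(\epsilon,\eta)$-NoisyGBS-Probability} accepts only circuits (a unitary on $M$ modes plus a noise description), so it simply cannot be queried along your path, nor at your worst-case endpoint $Y$, which is an arbitrary complex matrix rather than a circuit. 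This is precisely why the paper's proof abandons direct matrix interpolation in the noisy setting and instead interpolates \emph{gate-wise} along the Cayley path of Ref.~\cite{Movassagh2019}, which stays inside the set of valid circuits, and purifies each local noise channel with a finite-dimensional ancilla so that $\Pr_{\text{noisy}}[\mathbf n,\mathbf m][t]$ remains a low-degree polynomial in the path parameter via the Feynman sum-over-paths expansion.

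Separately, your use of the noise rate as an interpolation variable proves a weaker statement than the theorem. The theorem asserts $\#\mathsf{P}$-hardness for \emph{each fixed} $\eta\leq\eta_*$, whereas your reduction needs oracle answers at polynomially many distinct noise levels $\eta_j$, i.e., it only establishes hardness against an oracle that works across a whole window of noise rates. The paper never varies $\eta$: its worst-case anchor is the noisy probability itself at the given noise level, obtained by Fujii's error-detection argument~\cite{fujii2016noise} --- loss is local, stochastic, and error-detectable/correctable with linear optics, and postselected linear optics is universal~\cite{Knill2002a} --- so that conditioned on a heralded no-error outcome $\mathbf m$, the joint probability $\Pr_{\text{noisy}}[\mathbf n,\mathbf m]$ equals the $\#\mathsf{P}$-hard ideal probability up to a known factor $(1-\eta)^{O(Md)}$ and inverse-exponential corrections. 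In that argument $\eta_*$ is the error-detection threshold, not a conditioning constant for extrapolation down to $\eta=0$. The polynomial dependence on $\eta$ that you identify is real (it follows from the convex decomposition of $\mathcal N_i$, and does not require error-detectability as you suggest), but a proof of the stated theorem needs worst-case hardness \emph{at} the fixed nonzero noise level plus an interpolation that ranges only over valid circuits; your proposal supplies neither.
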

There are two parts to the proof.
The first part is a proof of worst-case hardness of the problem (when $\delta=1$), and the second a worst-to-average-case equivalence.
For worst-case hardness, it turns out that due to a result of Fujii~\cite{fujii2016noise}, it suffices for the noise channel to be a convex combination of the lossless and lossy channels, and to be able to error-detect it.
These conditions are both met for optical loss, since it is a convex combination of the channels corresponding to no photon loss, single-photon loss, and so on~\cite{Oszmaniec2018}.
Moreover, optical loss can also be detected and corrected using only linear-optical operations and photo-detection with high thresholds~\cite{bartolucci2021fusion}.
In Fujii's argument, one postselects on the error-free outcome of an error-detection code and obtains noiseless universal gates for the class of postselected quantum computation, $\mathsf{postBQP}$.
This argument can apply to the optical case as well, since linear optics with postselection is universal for quantum computing \cite{Knill2002a}.

For the worst-to-average-case equivalence, all we need is for the polynomial structure in the problem to be preserved.
This can be satisfied for any local noise model.
Preserving the polynomial structure of the output probability enables us to continue to use the same proof techniques as earlier.

Before moving on, we again remind the reader that we considered the hardness of computing output probabilities.
While these are not tasks that are feasible for any realistic quantum device, our results nevertheless indicate that there is a computationally intractable (but exponentially small) ``quantum signal'' present in the system.

\subsection*{The complexity of noisy and approximate GBS}
\label{sec:noisy apx gbs}

We now discuss the implications of the hardness result for computing noisy GBS probabilities on the complexity of sampling from the output distribution of noisy GBS. 
An immediate implication of this result is that it is classically hard to \emph{exactly sample} from the noisy distribution of a worst-case GBS experiment.
This is because the quantum signal is still present in the distribution, so the argument based on Stockmeyer's algorithm is valid.  
Thus, in the idealized situation in which loss is the only source of noise of an experimental system and the exact loss rate is known, simulating a worst-case GBS experiment is classically intractable. 
Note that loss rates can be inferred from standard optical tomography procedures such as that of Ref.~\cite{rahimi-keshari2011quantum}.
Given that this result links the hardness of simulating the noisy experiment to an exponentially small quantum signal in the form of output probabilities, it is crucial that the noise model accurately captures the working of the device.

We remark that an alternative proof establishing the classical hardness of exact sampling could possibly be made using a postselection argument similar to the one outline in Section 4.2 of Ref. \cite{Aaronson2013}.
As noted in Ref.~\cite{Aaronson2013} however, this approach has not been shown to provide a viable path towards the goal of showing hardness of approximate sampling.
By establishing the average case hardness of approximating output probabilities, Theorem \ref{lem_avgcasehardnoisy}, takes a substantive step towards establishing the hardness of approximate sampling, even in the presence of noise.

We now discuss the more realistic situation in which loss is the predominant, but not the sole, source of noise in a photonic experimental system.
What can we say about the hardness of approximate sampling in such a situation?
To begin with, let us draw on some intuition from RCS schemes acting on $n$ qubits. 
Here, the additive error incurred in estimating output probabilities using the Stockmeyer algorithm is $O(2^{-n})$ with high probability (since this is the size of a typical output probability in an RCS experiment).
In the presence of uncorrected noise, an error of $O(2^{-n})$ in the noisy output probability can be too large for hardness.
For example, there is evidence that with gate-wise depolarizing noise, the probabilities will deviate from uniform by merely $O(2^{-m})$, where $m$ (typically $\omega(n)$) is the total number of gates \cite{Boixo2018}.
This means that approximate-sampling hardness cannot be shown using these techniques, since it is not hard to approximate the noisy probabilities any more.
Indeed, in this regime, the noisy distribution is exponentially close in total-variation distance to the uniform distribution, rendering the approximate sampling task for the noisy distribution classically simulable.

In the case of noisy GBS, the dominant noise model, namely loss, leads to the vacuum state for a sufficiently deep network, which is again a distribution that is easy to classically sample from (similar to the uniform distribution in qubit RCS schemes).
However, if we post-select on a certain minimal number of photons surviving, the distribution need not be easy to simulate.
This post-selection is efficient when the depth of the circuit scales poly-logarithmically in the number of modes.
In this case, the quantum signal will be large enough so that even with an inverse exponential error, deviations from the easy distribution can be detected. 

This excludes the simulation algorithm that samples from an easy-to-simulate distribution such as the one uniform on every photon number sector with every sector sampled according to the ideal photon number distribution.
Ruling out trivial algorithms is a necessary condition for approximate average-case hardness to hold.
In summary, our results indicate that there might be `room in the middle' in terms of gate depth and noise rates, where hardness of sampling might hold.
In fact, this intuition lies at the heart of the high-dimensional architecture (presented below).
This architecture is designed in such a way that only as few gate applications as necessary for hardness are executed, so that the leeway for noise to ruin the hardness of sampling is minimized.
We stress, however, that at the moment, existing proof techniques do not suffice to make a claim of this nature. In fact, in certain regimes of noisy GBS, approximate sampling is known to be classically efficient~\cite{Qi2020}.

\subsection*{High-dimensional GBS and evidence for hardness}
\label{Sec:Loops}

The discussion thus far in this work and in the literature far has focused on the hardness of GBS with unitary transformations drawn randomly from the Haar measure.
This requires implementing arbitrary unitary transformations, an onerous requirement experimentally. 
In fact, Ref.~\cite{Zhong2020} did not meet this requirement of being able to implement arbitrary unitary transformations as a result of the interferometer being a fixed non-programmable device. 
Furthermore, there is reason to believe that in the absence of error-correction methods for linear optics, scaling arbitrary programmable interferometers to large numbers of modes is infeasible.
This is because implementing an arbitrary unitary transformation requires decomposing it into beam-splitters and phase shifters and, assuming they are all applied locally, this leads to a deep optical circuit, whose depth linearly scales with its size.
Since photon loss scales exponentially with the circuit depth, these models necessarily become efficiently simulatable classically for sufficiently large numbers of modes~\cite{Qi2020,Oszmaniec2017,Garcia-Patron2019}.

On the other hand, naively reducing the depth without giving up gate locality is not an option for QCA either. 
This is because shallow one-dimensional (1D) circuits comprising local interactions with logarithmically scaling depths can be efficiently simulated classically as these do not generate enough long-range entanglement~\cite{Garcia-Patron2019,Qi2020,qi2020efficient}.

These results motivate a demonstration of QCA on random optical circuits with shallow depth but with gates that are long-range in 1D, for example on circuits with local interaction in higher than one dimensions.
In such a setting, a potentially reduced amount of complexity due to the reduced depth would be compensated by the large long-range entanglement generation thanks to the inclusion of long-range interactions.
Therefore, such an architecture would suffer less noise build up but still remain intractable for classical computers.
Indeed, models with shallow-depth but with long-range (in 1D) interactions provide a natural approach to demonstrating QCA in qubit systems~\cite{Bermejo-Vega2018}.

We address the challenge of the low-loss versus depth tradeoff by introducing high-dimension GBS, where programmable non-local gates are exploited to generate entanglement between distant modes. 
We show how high-dimensional GBS can be implemented scalably using optical delay lines. 
Before presenting the new architecture, let us recall the relevant notation on GBS and discuss its physical implementation.

\subsection*{A programmable architecture for high-dimensional GBS}

\begin{figure}
    \centering
    \includegraphics[width=\columnwidth]{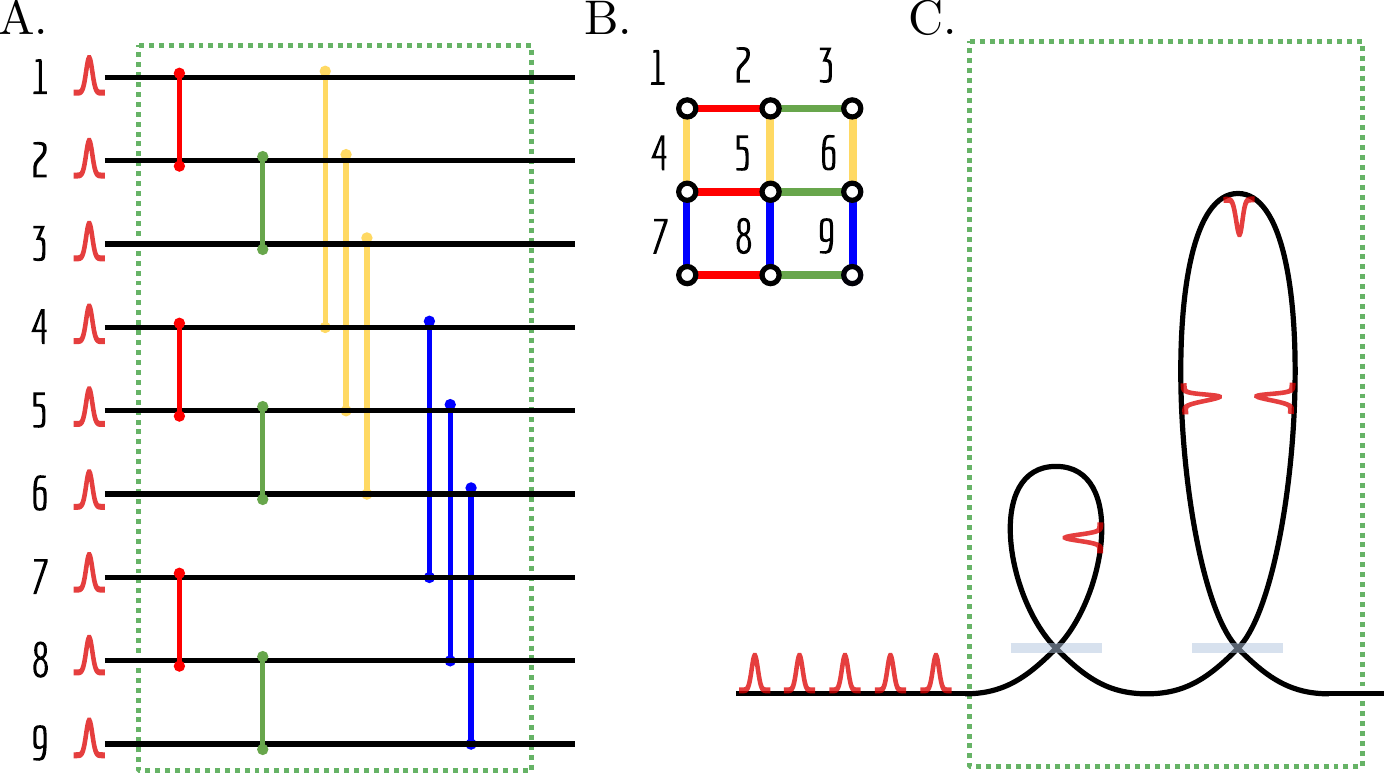}
    \caption{\raggedright \textbf{Different representations of a $D=2$-dimensional optical delay GBS instance
    with lattice size $a=3$.} (A)  Circuit representation. The vertical lines with dots at the end represent beam-splitters.
    (B) Bi-dimensional lattice representation. The vertices of the lattice represent the modes while edges represent beam-splitters. (C) Optical circuit representation. The modes are defined by time-bins traveling in a wave-guide. The horizontal gray slabs in the bottom of the delays represent the beam-splitters. 
    The number of cycles $C$ in a high-dimensional GBS instance correspond to applying multiple times the gates contained in the green-dotted box in Fig.~(A). This action physically maps to employing concatenating $C$ copies of the delays encircled in the green box in (C).
    Note that for simplicity we have not shown the photon-number detectors used to probe the quantum state at the end of the circuit.
    }
    \label{fig:2dgbs}
\end{figure}

Now we are ready to introduce high-dimensional GBS: a sampling task that retains the programmability of the photonic device, can reduce decoherence to a level that prevents classical simulability, and in which large amounts of multi-partite entanglement can be generated.
The last two requirements are to some extent at odds with each other: specifically, achieving long ranged interactions in fixed linear one-dimensional geometries  requires finding intermediary quantum systems to mediate interactions between far separated regions, which can lead to information leaking into the environment and require more challenging experimental conditions than all-optical experiments.
A way around this challenge is to consider two- or higher-dimensional geometries where quantum systems can interact with each in more than one direction. 
While the Google QCA experiment~\cite{arute2019quantum} involved interactions in 2D, our proposal can leverage photonics to implement distant non-local interactions, which can be equivalently considered as interactions in two or even higher than two dimensions.

More specifically, we show how the idea of using local interactions in high-dimensional spaces to generate large amounts of multi-partite entanglement can be naturally imported into photonic quantum computing by using optical delay lines and fast, programmable optical switches.
Before formally stating the problem of high-dimensional GBS we provide intuition for how to construct high dimensional lattices using minimal optical resources.
For the sake of concreteness and ease of visualization, we consider the generation of a lattice of size $a=3$ in $D=2$ dimensions where the vertices represent modes and the edges represent two-body gates.
A quantum circuit to achieve this connectivity and a representation of the obtained lattice are shown in Figs.~1(A) and (B), respectively.
Note that when the bosonic modes are represented as wires in a usual quantum circuit diagram the gates needed to prepare the state are highly non-local.
This is because circuit diagrams provide a representation where the modes are arranged linearly (in this case in the vertical direction of the page).
To show how optical delays provide a natural way to program short and long ranged interactions, consider first our temporal modes (pulses) prepared in squeezed-vacuum states arranged one after the other traveling along a single spatial mode, as schematically shown in Fig.~1(C).

We first consider how to achieve nearest-neighbor interactions using a delay line whose length equals the separation between the pulses.
As mode $i$ is in the delay line about to exit it, it will interfere with mode $i+1$ that is about to enter the delay line.
The beam-splitter mediating the interaction between these two-modes can be programmed allowing us to effect two-mode gates between nearest neighbors.
Such programmable and fast (i.e., with less than $50\,\text{ns}$ spacing) beam-splitters have been demonstrated using electro-optic modulation and have been used in the application of photonic quantum walks in the time domain~\cite{nitsche2018probing,schreiber2012a-2d-quantum}.

Now consider the second delay line, whose length is $a = 3$ times the separation between the pulses.
In this case, as mode $i$ is getting ready to exit the delay line, it will interfere with $i+a$ in the beam-splitter gate keeping the delay line.
This configuration allows interactions with range $a=3$ between the modes in the quantum circuit diagram in Fig.~1.
Note that this construction generalizes in a natural way to $D$ dimensions.
In particular, nearest-neighbor interactions in a $D$-dimensional space with $a$ lattice points per dimension (corresponding to gates with range $a^{D-1}$ in a circuit diagram) can be implemented using a circuit with $D$ optical delay lines implementing delays by amounts $\{1,a,a^2,\ldots,a^{D-1}\}$.
If the light is made to pass through $D$ such multiple such delay lines, with $C$ passes, then the effective transformation is composed of $C$ \textit{cycles} of local interactions in a $D$-dimensional lattice or equivalently, $C$ cycles of up to $a^{D-1}$-range gates in a circuit diagram. 

Having provided a quantum optical implementation of high-dimensional GBS we are now ready to formalize it by specifying four quantities: the squeezing parameter $r$, the lattice dimension $D$, the lattice size $a$, and the number of cycles $C$. An $(r,a,D,C)$-high dimensional GBS instance is constructed as follows:
\begin{enumerate}
  \item Prepare $M:=a^D$ single-mode squeezed vacua $|{r}\rangle^{\otimes M}$.
  \item For $\tau=1$, apply a beam-splitter $V$ to mode $i$ and $i+\tau$, where $i\in[0,M-\tau)$.
  \item Repeat Step 2 for $\tau=a^d$ for $d=0,\ldots, D-1$.
  \item Repeat Step 2 and step 3  a total of $C$ times.
\end{enumerate}
Having a physical architecture to implement high-dimensional GBS, we can now write down a loss budget to account for the bulk of the decoherence affecting our system.
Assume that the photon-number detectors used to probe our quantum state are limited by a rate of $\nu$ detections per second, for example as a result of the detectors dead times.
From this time scale we deduce a length scale $\ell = v/\nu$, where $v$ is the speed of light in the delay lines.
We associate with the length scale an energy transmission constant $\eta_{\text{unit-length}}$, which is simply the total energy transmission resulting from a propagation over a total length of $\ell$. 

Let us first study the case $C=1$. 
In this case, every mode will traverse $D$ beam-splitters (to access the $D$ different delay lines) and will propagate a total distance of $ \ell \times \sum_{i=0}^{D-1} a^i = \ell \times \dfrac{a^D-1}{a-1} \approx \ell a^{D-1} $ if $a \gg 1$. 
We can approximate the total transmission to scale roughly as 
\[
\eta = \eta_{\text{BS}}^D \eta_{\text{unit-length}}^{a^{D-1}} = \eta_{\text{BS}}^{D} \eta_{\text{unit-length}}^{M^{1-1/D}},
\]
where $\eta_{\text{BS}}$ is the beam-splitter transmissivity for programmable beam-splitters based on electro-optic modulation.
Note that in this case the loss scales sub-exponentially with the total number of modes.
To allow two or more circulations, one can consider $C\geq 2$ copies of the original $D$ delay lines, giving now an updated loss budget in which the modes traverse a length proportional to $C a^{D-1} = C M^{1-1/D}$ and will pass through $C D$ beam-splitters, still leading to sub-exponential loss accumulation.
An alternative to these $C$ copies of the delay lines is to consider a re-circulation loop similar to that proposed in Ref.~\cite{motes2014scalable}, which reroutes the output of the last delay line into the input of the first one.
The delay line used to implement the recirculation loop holds any modes that are not interfering inside the delay lines.
If the recirculator has a loss per unit length $\eta_{\text{unit-recirc}}$, the net loss scales as $\eta_{\text{unit-recirc}}^L$ where $L = a^{D} - \sum_{i=1}^{D-1} a^D = \Theta(M)$. 
Thus, depending on the exact setting, for a fixed $C$, the losses scale either exponentially (using recirculators) or sub-exponentially (considering multiple copies of the $D$ loops) with the number of modes.

We note that with current fiber-optic and photon number resolving technology, $\eta_{\text{unit-length}}$ can be as high as $0.998$;  $\eta_{\text{BS}}$ values of $0.9$ are expected or are observed in state of the art experiments such as Ref.~\cite{nitsche2018probing}.
With these, the transmission of an interferometer with parameters $(a=15, D=2, C=2)$ can be above $0.70$, and above $0.74$ for $(a=6, D=3, C=1)$.
These values promise an order or magnitude or more enhancements in loss values as compared to those expected in fully programmable GBS devices~\cite{taballione2020a-12-mode}.
As noted in Ref.~\cite{motes2015implementing}, interferometers implemented using loops will typically have unbalanced losses. The numbers quoted above assume the lossiest interferometer implementable in a loop based system, which is precisely the one in which each and every mode is fully transmitted into each loop.

From the formal description of high-dimensional GBS, the covariance matrix of the generated Gaussian state can be calculated in the usual manner.
In particular, we only need to specify the unitary matrix describing steps (2)-(4) above.
This unitary matrix is given by
\[\label{eq:defU}
U  = \bigotimes_{c=1}^C \bigotimes_{d=0}^{D-1} \bigotimes_{i=0}^{M-a^d} B_{i,i+a^d}(V)
\]
where $B_{i,j}(V)$ is an $M \times M$ unitary matrix that acts like the locally Haar-random beam-splitter $V$ in the subspace of modes $i$ and $j$ and like the identity elsewhere.
We denote by $\mathcal{U}$ the ensemble of linear-optical unitaries applied this way.
In Fig.~2, we show heat-maps of the unitary matrices associated with two typical instances from the distribution $\mathcal{U}$ over high-dimensional GBS instances.
Note that the structure of circuits considered allows for light from the first mode to be observed in any of the later modes, which leads to a large light cone that is somewhat different from the efficiently simulable circuits considered in recent Ref.~\cite{oh2021classical}.
From the description of the unitary matrices and the squeezing parameters, we have that the complex-valued adjacency matrix (as defined above) of the Gaussian state is dense, full-rank and given by $A = \tanh(r) U U^T$. 

While implementing a time-domain reconfigurable loop architecture as described above is not a straightforward task, several groups have performed experiments with tens of modes interfering in time-domain multiplexed configurations. These include time-bin~\cite{he2017time} and temporal-to-spatial encoded~\cite{wang2019boson}   boson sampling experiments~\cite{he2017time} and controllable photonic random walk over multiple time-bins~\cite{schreiber2012a-2d-quantum,lorz2019photonic}. Moreover, recent experiments have shown that is possible to operate with very high phase-stability~\cite{larsen2021deterministic}, high quantum-efficiency photon-number detection~\cite{arrazola2021quantum} and very low loss reconfigurable interferometric elements~\cite{takeda2019demand}.

\begin{figure}
    \includegraphics[width=\columnwidth]{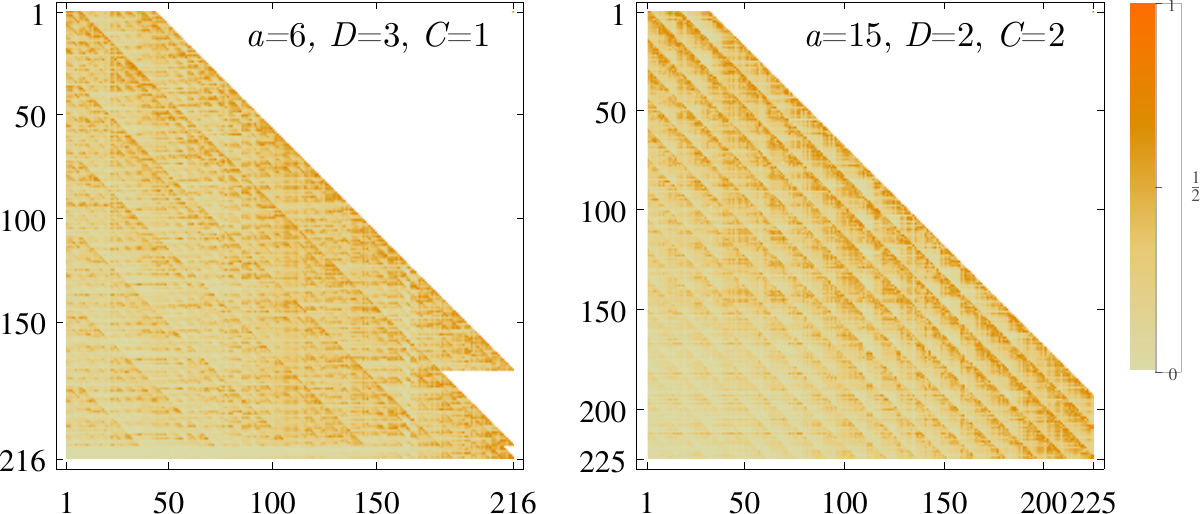}
    \caption{\raggedright%
    \textbf{Absolute values of the entries of the unitary matrices associated with two high-dimensional GBS instances drawn from $\mathcal{U}$.} On the left we show an $(a=6, D=3, C=1)$ instance and on the right we show an $(a=15, D=2, C=2)$ instance. 
    Note that we explicitly color the zero entries of the unitary white; thus the color scale is discontinuous at this end.
    \label{fig:Unitaries}}
\end{figure}

Finally, for the purpose of calculating outcome probabilities, squeezed states can be considered in the Fock basis as qudits that are entangled by the beam-splitter operations. 
This process, as with any other quantum circuit, can be represented as networks of tensors~\cite{lubasch2018tensor,dhand2018proposal}. 
In more detail, here the qudits are initially single-index tensors (vectors) that are contracted with four-index tensors representing the beam-splitters to build an open tensor network (TN), which can then be contracted to obtain the tensor of the final state.
The TN representing the state can be used to calculate probability amplitudes of measurements when the output indices of the TN are contracted with vectors representing measurement outcomes.
Similar TN-based techniques have been successful at delineating the QCA frontier in the context of random circuit sampling, and together with Hafnian based methods these will serve a similar purpose for high dimensional GBS.

\subsection*{Hardness for computing noisy probabilities in high-dimensional GBS} \label{Sec:restricted}

Here, we now argue for the hardness of computing output probabilities for the noisy, high-dimensional GBS setup. 
In particular, we show that hardness is present even in shallow depth noisy high-dimensional GBS architectures. 
This is in contrast to the results discussed earlier, where no restriction is made on the depth.

To do this, we simply observe that the previous argument for worst-case hardness, which depends on the noise being local and error-detectable, continues to hold for the limited-depth setup \cite{Brod2015}.
For average-case hardness of computing noisy probabilities, we again use a worst-to-average-case reduction.
However, the polynomial interpolation in this case is different, since a random instance is not Haar distributed any more but rather according to $\mathcal{U}$, the distribution over random instances of high-dimensional GBS.
To explain further, consider the usual interpolation $X(t) = (1-t) X + tY$, where $X(0) =X$ is drawn from $\mathcal{U}$ and $X(1)=Y$ is the matrix corresponding to a worst-case high-dimensional GBS instance.
In this case, there is no guarantee that the interpolated matrices $X(t)$ also correspond to high-dimensional GBS instances of small depth.
We get around this issue by choosing a gate-wise interpolation that is similar to that seen in RCS \cite{Bouland2019,Movassagh2019}.

We first define the problem of computing output probabilities of a restricted-depth high-dimensional GBS architecture.
\begin{problem}{$(\epsilon,\eta)$-HighDimensional-NoisyGBS-Probability}
Input & A noisy GBS instance drawn from $\mathcal{U}$ that can be implemented in $D$ dimensions with a constant number of cycles $C=O(1)$ with noise parameter $\eta$, and a description of a collision-free outcome $\mathbf{n}$ with $N=\mathsf{poly}(M)$ photons. \\
Output & With probability $\delta$ over instances, an estimate of $\Pr(\mathbf{n})$ to additive error $\epsilon$. \\
& With probability $1-\delta$, an arbitrary output.
\end{problem}

Similar to the previous results, we can again obtain an average-case hardness result that we state here and prove in the Supplementary Material.
\begin{thm}\label{lem_avgcasehardnoisy_HDGBS}
There exists a noise threshold $\eta_*$ and a sufficiently large polynomial such that the problem \textsc{$(\epsilon,\eta)$-HighDimensional-NoisyGBS-Probability} is $\#\mathsf{P}$-hard under $\mathsf{PH}$ reductions for any constant $\delta > 3/4$, $\eta \leq \eta_*$, and $\epsilon \leq 2^{-\mathsf{poly}}$.
\end{thm}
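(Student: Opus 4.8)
The plan is to mirror the two-part structure already used to prove Theorem~\ref{lem_avgcasehardnoisy}, adapting each part to the restricted-depth, non-Haar setting. For the worst-case hardness (the $\delta=1$ case), I would invoke exactly the same error-detection argument as before: since the noise channel in Eq.~\eqref{eq_noisechannel} is local, stochastic, and a convex combination of the lossless and lossy channels, Fujii's postselection technique~\cite{fujii2016noise} applies, yielding noiseless universal gates for $\postBQP$. The only new ingredient is that this must work at \emph{shallow depth}, and here I would appeal to the result of Brod~\cite{Brod2015} that postselected linear optics remains universal (hence $\postBQP$-hard to simulate) even in constant-depth architectures. Because the high-dimensional GBS construction with $C=O(1)$ cycles is precisely such a constant-depth circuit with long-range gates, worst-case $\#\mathsf{P}$-hardness of computing an output probability follows from the collapse of $\mathsf{PH}$ that efficient simulation would imply.

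The substantive work is the worst-to-average-case reduction, and the crux is that the standard linear interpolation $X(t) = (1-t)X + tY$ does \emph{not} preserve membership in the ensemble $\mathcal{U}$: an affine combination of two restricted-depth instances need not be realizable as a restricted-depth high-dimensional GBS unitary. I would therefore replace it with a \emph{gate-wise} interpolation in the spirit of the RCS hardness arguments~\cite{Bouland2019,Movassagh2019}. Concretely, recalling the factorized form
\[
U = \bigotimes_{c=1}^C \bigotimes_{d=0}^{D-1} \bigotimes_{i=0}^{M-a^d} B_{i,i+a^d}(V),
\]
I would interpolate each constituent beam-splitter $V$ independently along a path in the local unitary group (or equivalently in the Haar ensemble from which each $V$ is drawn), so that for every $t$ the interpolated object is \emph{still} a valid $(r,a,D,C)$-instance drawn from $\mathcal{U}$. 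The key analytic fact, already noted for the exact case in the discussion preceding Theorem~\ref{lem:exact average case hardness}, is that ${|\mathrm{Haf}((UIU^T)_{\mathbf{n},\mathbf{n}})|}^2$ is a low-degree polynomial in the matrix entries; under a gate-wise parametrization it remains a polynomial of degree polynomial in $N$ in the interpolation parameters. One then queries the average-case oracle at $\mathsf{poly}$ many points along the path, each landing inside $\mathcal{U}$ with high probability, and uses robust polynomial interpolation (as in Refs.~\cite{Aaronson2013,bouland2021noise,kondo2021fine-grained}) to extrapolate the worst-case value at the endpoint, absorbing the $2^{-\mathsf{poly}}$ additive error into the interpolation error bound.

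The main obstacle I anticipate is controlling the \emph{tails} of the interpolation: I must certify that the intermediate points $X(t)$ are distributed according to $\mathcal{U}$ (or close enough that the $\delta>3/4$ average-case oracle still applies at a majority of sampled points), \emph{and} that the condition number / Lipschitz constants of the gate-wise polynomial do not blow up with $M$, so that inverting the Vandermonde-type system keeps the error inverse-exponentially small. A secondary subtlety is ensuring the noise parameter is preserved along the path, i.e.\ that every interpolated instance genuinely carries noise rate $\eta \le \eta_*$; since the noise acts locally after each gate and the gate-wise interpolation touches the gates but not the noise channels, I expect this to go through, but it is exactly the point where ``the polynomial structure is preserved'' must be verified carefully rather than asserted. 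The noise threshold $\eta_*$ and the degree of the ``sufficiently large polynomial'' then emerge, as in Theorem~\ref{lem_avgcasehardnoisy}, from the interplay between Fujii's error-detection overhead and the precision demanded by the interpolation step.
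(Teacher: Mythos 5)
Your average-case step is essentially the paper's own argument: the paper likewise identifies that the linear interpolation $X(t)=(1-t)X+tY$ need not stay inside the restricted-depth ensemble $\mathcal U$, and replaces it with a gate-wise interpolation in the spirit of RCS~\cite{Bouland2019,Movassagh2019} (concretely, the Cayley-path parametrization of Ref.~\cite{Movassagh2019}, so that each interpolated gate is still a unitary and the noisy probability remains a low-degree polynomial in $t$ after gate-wise purification of the local noise), followed by the robust Berlekamp-Welch procedure of Ref.~\cite{bouland2021noise}. That part of your proposal is sound and matches the paper.

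The genuine gap is in your worst-case step. You write that, by Brod's result~\cite{Brod2015} on postselected constant-depth linear optics, ``worst-case $\#\mathsf{P}$-hardness of computing an output probability follows from the collapse of $\mathsf{PH}$ that efficient simulation would imply.'' This inference is invalid, and the paper explicitly flags why: the postselection result by itself only yields $\mathsf{PP}$-hardness of \emph{strong simulation}, i.e.\ of computing output probabilities \emph{and} all marginals; it does not yield $\#\mathsf{P}$-hardness of evaluating a single, fixed output probability, which is what the problem \textsc{$(\epsilon,\eta)$-HighDimensional-NoisyGBS-Probability} asks for. (PH-collapse arguments concern the hardness of \emph{sampling} given an efficient sampler, and do not transfer to the hardness of evaluating one probability.) The paper closes this hole by ``opening up'' Brod's postselection proof into an \emph{amplitude-preserving reduction}: one embeds an arbitrary $\mathsf{BQP}$ circuit into a constant-depth high-dimensional GBS instance so that a specific output amplitude of the GBS circuit reproduces an amplitude of the $\mathsf{BQP}$ circuit, and since computing $\mathsf{BQP}$ amplitudes is $\#\mathsf{P}$-hard, so is computing the corresponding GBS output probability; only then does the error-detection argument of Fujii~\cite{fujii2016noise} carry this over to the \emph{noisy} probability for all $\eta \leq \eta_*$. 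Without this step, your chain of reductions bottoms out at $\mathsf{PP}$-hardness of strong simulation, which is strictly weaker than the statement of Theorem 5, and the subsequent worst-to-average-case reduction would then be anchored to the wrong worst-case problem.
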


\subsection*{QCA frontier for high-dimensional GBS}
\label{Sec:Frontier}

The evidence presented above for the hardness of high-dimensional GBS comes from complexity-theoretic arguments, which are asymptotic in nature, i.e., they only specify how the hardness of a certain computation scales as the problem size is increased. 
For a finite sized device, we now address a complementary but more immediate question: how much actual computational power would a classical adversary need in order to generate samples similar to those from finite-sized noisy GBS devices?

This question can be addressed with different assumptions about the classical adversary.
The experiment can be benchmarked either against simulations that try to match a reasonable model of the experiment (constrained adversary) or against simulations that merely try to spoof a given test (unconstrained adversary). 
The latter approach would be more rigorous as it requires making fewer assumptions; but coming up with good spoofing methods is a problem beyond the scope of this work and should be seen as an ongoing community effort~\cite{renema2020marginal}.
Similar to the approach of the Google and USTC supremacy experiments~\cite{villalonga2020establishing,li2020benchmarking}, we focus on the former approach---with a classical adversary producing samples according to a noisy model distribution---because these samples are likely to perform at least as well as the actual device in suitable verification tests~\cite{kliesch2021theory}.
In other words, we assume a specific model of the imperfect GBS device, and we demand that the classical adversary generate samples that have a probability distribution that is sufficiently close in total variation distance to the probability distribution of this model. 
We note however that the chosen model might not have been verified against the actual experiment as this sample-efficient noise-model verification of QCA experiments is a challenging problem, especially for boson sampling and GBS.

We perform this benchmarking by simulating high-dimensional GBS with state-of-the-art algorithms on the current best supercomputers.
In particular, we consider the fastest algorithms based on computing probability amplitudes via Hafnians and via tensor-network contractions.
The former, Hafnian-based, algorithms have been optimized for simulating GBS and are not restricted to high-dimensional GBS~\cite{quesada2020exact}.
The latter, tensor network algorithms are well-suited for high-dimensional qudit circuits with shallow depth~\cite{gray2020hyper}.
We note that Ref.~\cite{qi2020efficient} also provides a path to simulating lossy GBS if the losses scale exponentially with the system size, but these results are not applicable for high-dimensional GBS, where the losses can scale subexponentially.
By benchmarking against these algorithms we demonstrate that high-dimensional GBS experiments feasible with current optical technology are well beyond the reach of the biggest supercomputers.

\section*{Discussion}

In this work, we have proposed a new experimental architecture for Gaussian boson sampling and provided asymptotic evidence for the hardness of Gaussian boson sampling in this specific context, bridging the gap between theory and experiment.
We have also benchmarked today's best-known algorithms at simulating such an experiment, obtaining complementary evidence that a reasonably-sized setup would outperform classical supercomputers at this task.
Still, some theoretical questions are outstanding.
\begin{enumerate}

    \item We have been able to show that two plausible conjectures in random matrix theory allow us to obtain the hiding property for a noiseless GBS set up, without restrictions on the number of active modes. Can we obtain a similar hiding property for the high-dimensional GBS set-up introduced in this work? Is this also possible in the presence of noise? Answering these questions is crucial for extending the hardness of computing output probabilities to the hardness of approximate sampling from experimentally realizable distributions.
    \item Informally, the anti-concentration conjecture for boson sampling (or GBS) states that the output probability of a random instance is unlikely to be very small. If this conjecture was true, then now-standard arguments can show that the output probability corresponding to an approximate sampler is, with high probability, a good multiplicative estimate to the ideal output probability. Proving this conjecture true, in either the case of boson sampling or GBS, would give increased evidence to support the goal of proving QCA via photonics. 
    A proof of such a conjecture is challenged
    due to the fact that tools of unitary designs
    \cite{Hangleiter17AnticoncentrationQuantumSpeedup}
    are presumably unavailable in the bosonic setting~\cite{Aaronson2013}.
    \item 
    Notwithstanding, it would be insightful to compute the second moments $\mathbb E_{X \sim \mathcal{G}(0,1/M)} |\mathrm{Haf}(XX^T)|^4$ for the distribution we have found to characterize GBS problem instances. 
    These moments thus characterize the so-called \emph{collision probability} of seeing the same outcome twice in an experiment, which in turn can be related not only to anti-concentration but also the verifiability of approximate GBS from samples, 
    thus shedding some light on the structure of the GBS output distribution.
    \item 
    An important task in demonstrating QCA is to verify that the performed experiment indeed contains a non-trivial quantum signal that cannot be efficiently spoofed. 
    The Google QCA demonstration relied on linear cross-entropy benchmarking fidelity, and the USTC experiment used a heavy-output generation (HOG) ratio test as an alternative path to verifiable hardness.
    Whether the HOG-ratio test can be spoofed efficiently by a classical adversary such as the algorithms considered in Refs.~\cite{kalai2014gaussian,renema2020marginal} is an open problem.
    \item The recent result in Ref.~\cite{oh2021classical} presents a classical algorithm for the simulation of high-dimensional boson sampling experiments in certain regimes. 
    As described, this algorithm is not applicable to the architecture we propose in this work. 
    Extending the algorithm to be relevant to the present architecture is an open problem.
    \item With current optical technology, loss is the dominant source of noise in any GBS experiment. Consequently we were motivated to obtain hardness results for computing the output probabilities of a GBS experiment in the presence of significant photon loss. 
    It is natural to investigate if similar hardness results can be obtained in the presence of other possible sources of experimental noise such as such as mode mismatch, multiple Schmidt modes, interferometer phase drift and detector dark counts. 
    \item It is a challenge to the community, after all, to relate boson sampling closer to practically important computational tasks and to identify new applications.
\end{enumerate}

In summary, this work brings the demonstration of QCA on a programmable photonics device closer to reality. It addresses previously outstanding theoretical challenges in the field by providing stronger evidence for the hardness of GBS.
Crucially, we have presented a novel architecture for high-dimensional GBS using optical delay lines that promises low levels of noise without compromising on its programmability. 
We benchmarked this architecture against the best available classical simulation algorithms and found that already experiments involving a moderate number of modes are far beyond reach for those algorithms.

We close by briefly commenting on the experimental prospects of realizing high-dimensional GBS. Since high-dimensional GBS can be implemented in the time domain according to the scheme presented in Fig.~1, only a single squeezer and a single detector are required.
If multiple detectors are available, these can be de-multiplexed using optical switches in order to increase the effective repetition rate of the experiment and reduce the length of the delay lines.
Especially promising is the case of $D = 3$, $a = 6$, $C = 1$, which can be implemented with only three optical delay lines and three each of re-programmable beam-splitters and phase shifters.
Assuming reasonable values of squeezer out-coupling losses, free-space to fiber coupling loss and detector efficiency \cite{larsen2021deterministic,Larsen2019,arrazola2021quantum,takeda2019demand}, we estimate that such a setup can be built using current optical technology with around 40\% transmission, higher than that enabled by the ultra-low non-programmable loss interferometer in the USTC experiment. 
Such a setup would enable the largest demonstration of QCA yet with a mean detected photon number of 80 in a programmable device with 216 total modes. 
We hope that this work stimulates such developments.

\section*{Material and Methods}
\subsection*{Computational task: Sampling from lossy GBS with finite Fock cut-off}

Before looking into concrete strategies for the simulation of GBS we detail the computational task performed by the GBS device and discuss some differences between the task and our simulation.
The experimental device samples from a lossy GBS distribution with a finite Fock-basis cut-off, which results from detector limitations.
In order to identify a range of parameters where this task is hard to simulate classically, we benchmark it against classical simulations.
The simulations that we compare are somewhat different from the exact task performed by the experiment but in such a way that is advantageous to the classical simulations, thus providing stronger evidence for the large computational cost of high-dimensional GBS.
We now discuss these differences.

The first point of difference is the Fock cut-off, i.e., the number of Fock or photon-number levels considered in each mode.
Both Hafnian and tensor-network simulations are performed in Fock basis and their performance is thus sensitive to the Fock cut-off.
This cut-off must be chosen carefully because the squeezed state inputs in GBS have non-zero support on high Fock numbers (which could be infinite in the ideal case)~\cite{quesada2020exact}.
For Hafnian-based simulations, the Fock cut-off $c$ will lead to a constant prefactor $2^{c}$ ($2^{c/2}$) in the runtime for calculating mixed-state (pure state) probabilities that would appear in sampling methods. 
Similarly, for tensor network simulations, this cut-off sets the qudit dimension in the calculation, which is also the base of the exponential function describing the time and space cost of contracting the tensor network. 
Note that squeezed states of light require that we use local Hilbert spaces with at least dimension 3, since truncating a squeezed state to the first two levels of the Fock ladder will project it into the vacuum, since $\langle{1}|{r}\rangle = 0$.
Furthermore, using a Fock cut-off of $3$ in the beam-splitter gates leads to highly inaccurate simulations as the beam-splitter transformations on a limited Fock subspace no longer preserve photon numbers.
In other words, choosing higher Fock cut-offs will lead to more accurate but more expensive simulations.
Hence, we use a cut-off of 4 to give a conservative estimate on the computational cost, even though this cut-off would lead to inaccurate classical simulations.

\begin{figure}
\centering
\includegraphics[width=\columnwidth]{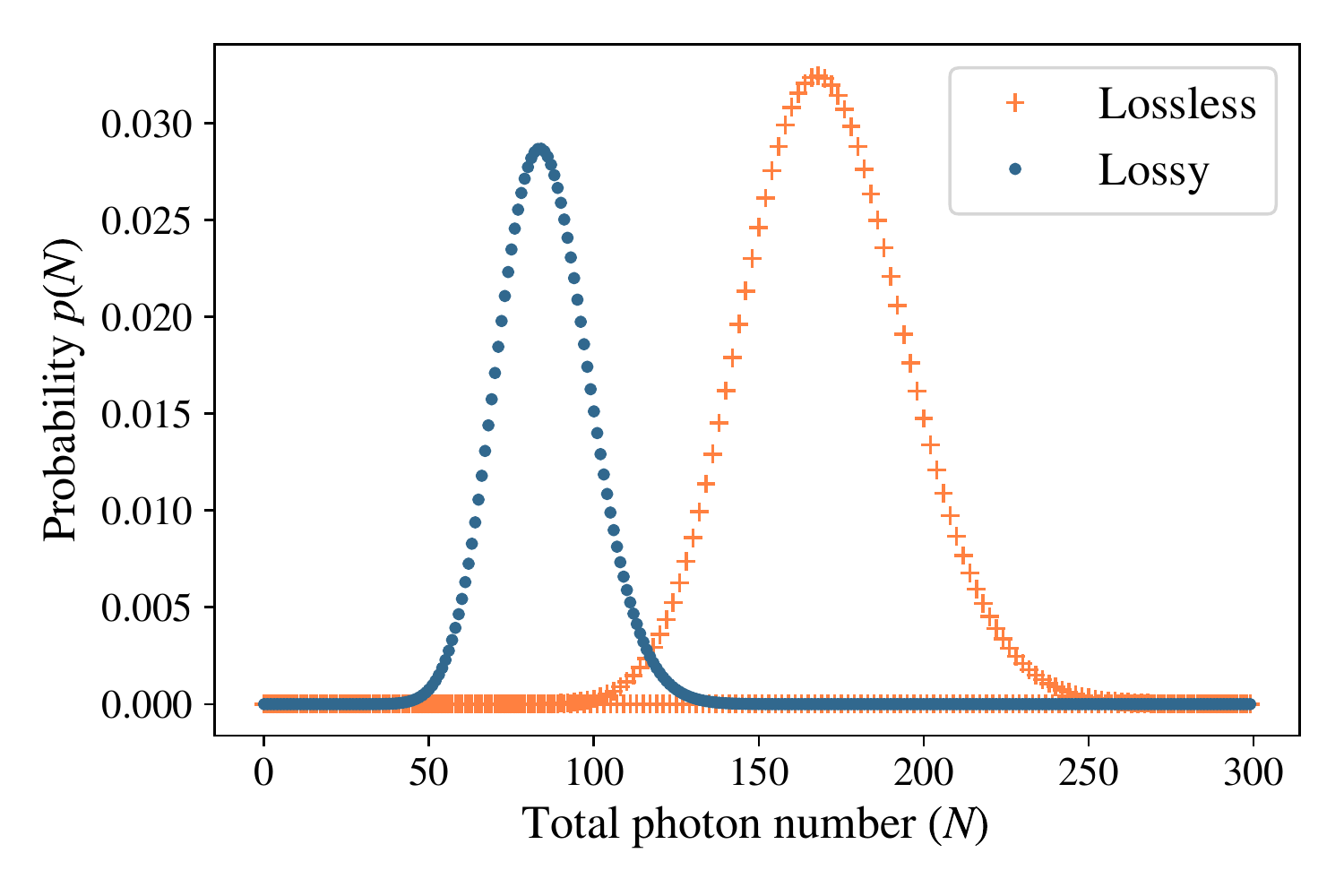}
\caption{\raggedright \label{fig:total_n} 
\textbf{Distribution of the total photon number for $M=216$ single mode squeezed states with squeezing parameter $r=0.8$.} We assume a total transmission of $\eta = 0.5$ (corresponding to roughly 3 dB of loss) for the lossy distribution. Note that the lossless distribution has no support on odd numbers of photons, which explains why visually it looks as if it has more area under the curve.}
\end{figure}

A second point of difference is that our simulations deal with the case of simulating pure states with photon numbers equal to the lossy distribution.
This is a reasonable simplification, since as shown in  Ref.~\cite{quesada2020quadratic},
simulating pure or mixed state GBS has the same complexity as calculating a number of pure-state probability amplitudes proportional to the number of modes in the system.

Before describing the effect of loss on the two simulation methods, we discuss the effect of loss on the number of detected photons.
In Fig.~3, we plot the lossless and lossy (transmission $\eta= 0.5$ $\approx 3$ dB loss) distribution for $M=216$ modes and squeezing parameter $r=0.8$. 
These parameters have been chosen to correspond with an $(r=0.8, a=6, D=3,C=1)$ high-dimensional GBS instance with experimentally reasonable loss budgets. The squeezing parameter $r=0.8$ is chosen to be within reach of current sources of single-Schmidt mode degenerate squeezed light~\cite{vernon2019scalable}. 
Note that the lossy distribution has smaller mean and variance than the lossless one \cite{dodonov1994photon}, indicating that it becomes easier to simulate a lossy distribution as the transmission $\eta$ is decreased. For example, the outcome with the highest probability in the lossless distribution 
\begin{equation}
n^* = 2\left  \lfloor{ \left( \tfrac{M}{2}-1\right)\sinh^2 r}\right \rfloor =168
\end{equation}
has a probability of $7.28 \times 10^{-8}$ under the lossy distribution. The leftwards shift of this distribution will in general be present whenever loss acts on a pure state. 
For $M$ identical squeezers (with squeezing parameter $r$) undergoing loss by energy transmission $\eta$, the mean and variance contract at least proportionally to $\eta$
\[
\mathbb{E}(n) =  \eta M \sinh^2 r, \\
\text{Var}(n) = \eta  M \sinh ^2 r (1+ \eta  [1+ 2 \sinh^2 r]),
\]
confirming our intuition, and moreover showing that the prevailing sources of decoherence in photonic sampling problems behave differently from the ones in random circuit sampling implemented in superconducting circuits, where noise makes the output probability distribution become uniform~\cite{Boixo2018}.

We now focus on the case of Hafnian-based algorithms.
The cost of calculating the relevant probabilities depends only on the number of photons detected.
Calculating a photon-number probability $\Pr(\bar{n})$ of a mixed state is roughly quadratically more expensive than calculating a pure state probability of an event with the same number of photons~\cite{quesada2019simulating}.
However, the cost of sampling pure and mixed states is similar.
This is because lossy GBS states are classical mixtures over a displacement parameter of pure Gaussian states.
Therefore, it is possible to sample from a lossy state by sampling from the convex hull parametrized by the displacement parameter and then sampling from the pure state.
Thus, sampling lossy GBS states has similar computational cost as sampling pure states with the same number of photons.

Likewise, for the tensor-networks based algorithms, the cost for mixed state calculations would scale at least quadratically worse as compared to pure state calculations. This is because twice as many tensors are involved in a mixed state calculation, analogous to the quadratic overhead of keeping track of the density matrix as compared to a pure state. Note that for noisy random circuit sampling  of qubits, one can trade fidelity for sampling speed \cite{markov2018quantum}. As opposed to GBS, this improvement is possible because in RCS, the amplitudes of the different Feynman-like paths that appear when slicing through two-body gates in the circuit are comparable.
Moreover, this improvement is useful as long as the Schmidt-rank of the two-body gates used to generate entanglement is small, which is not the case for the beam-splitter.
Furthermore, the state vectors associated with two different paths are approximately orthogonal. 

A final point of difference between our simulations and the actual experiment is that while our run-time estimates are for the \textit{calculation} of the GBS probabilities, an actual experiment \textit{samples} from this distribution. 
Despite this difference, our simulations allow a fair benchmarking of the quantum device because current state-of-the-art algorithms possess similar complexities of sampling and calculating probabilities. We moreover give the classical adversary an extra advantage in that we allow it to assume that only pure-state output probabilities need to be calculated for sampling as opposed to the quadratically slower mixed-state output probabilities, since as explained above, mixed Gaussian states are convex mixtures of pure ones.
For the case of Aaronson-Arkhipov Boson Sampling, this argument was shown to be correct by using \emph{Markov-Chain Monte Carlo (MCMC)} methods to generate samples from the ability of calculating pure-state output probabilities~\cite{neville2017classical}.

In summary, we provide maximal advantage to a classical adversary by choosing a low Fock cut-off, by performing pure state simulations with low photon numbers and by estimating time for computation rather than sampling (which is at most polynomially slower using currently known methods).
This advantage ensures that despite improvements in the classical algorithms, the space of parameters that are hard to simulate classically remain so.

\subsection*{Hafnian-based algorithms}
Consider now the probability amplitudes of $n$-photon events by evaluating the Hafnian. 
Similar benchmarkings have been performed in the past for the calculation of permanents~\cite{wu2018benchmark} (relevant to boson sampling) and Torontonians~\cite{li2020benchmarking} (relevant to GBS with threshold detectors).
For either of these two tasks, the time complexity of calculating a probability corresponding to an $n$-photon event scales like $O(\text{poly}(n)2^n)$, which is quadratically worse than for GBS, which scales as $O(\text{poly}(n)2^{n/2})$.
For the case of boson sampling, this difference stems from the fact that any probability amplitude with $n$ photons maps exactly to a GBS instance with $2n$ photons.
For the case of threshold detection it stems from the fact that one cannot assign probability amplitudes to a measurement that is not rank-one, like the POVM representing a ``click'' which is a coarse-graining of all the projectors with nonzero photons. 
In any case, for either of these tasks, benchmarks up to $n=50$ have been carried out requiring on the order of two hours for boson sampling using Tianhe-2~\cite{wu2018benchmark} and on the order of 20 hours for GBS with threshold detectors in Sunway TaihuLight~\cite{li2020benchmarking}. 

\begin{figure}\centering
\includegraphics[width=\columnwidth]{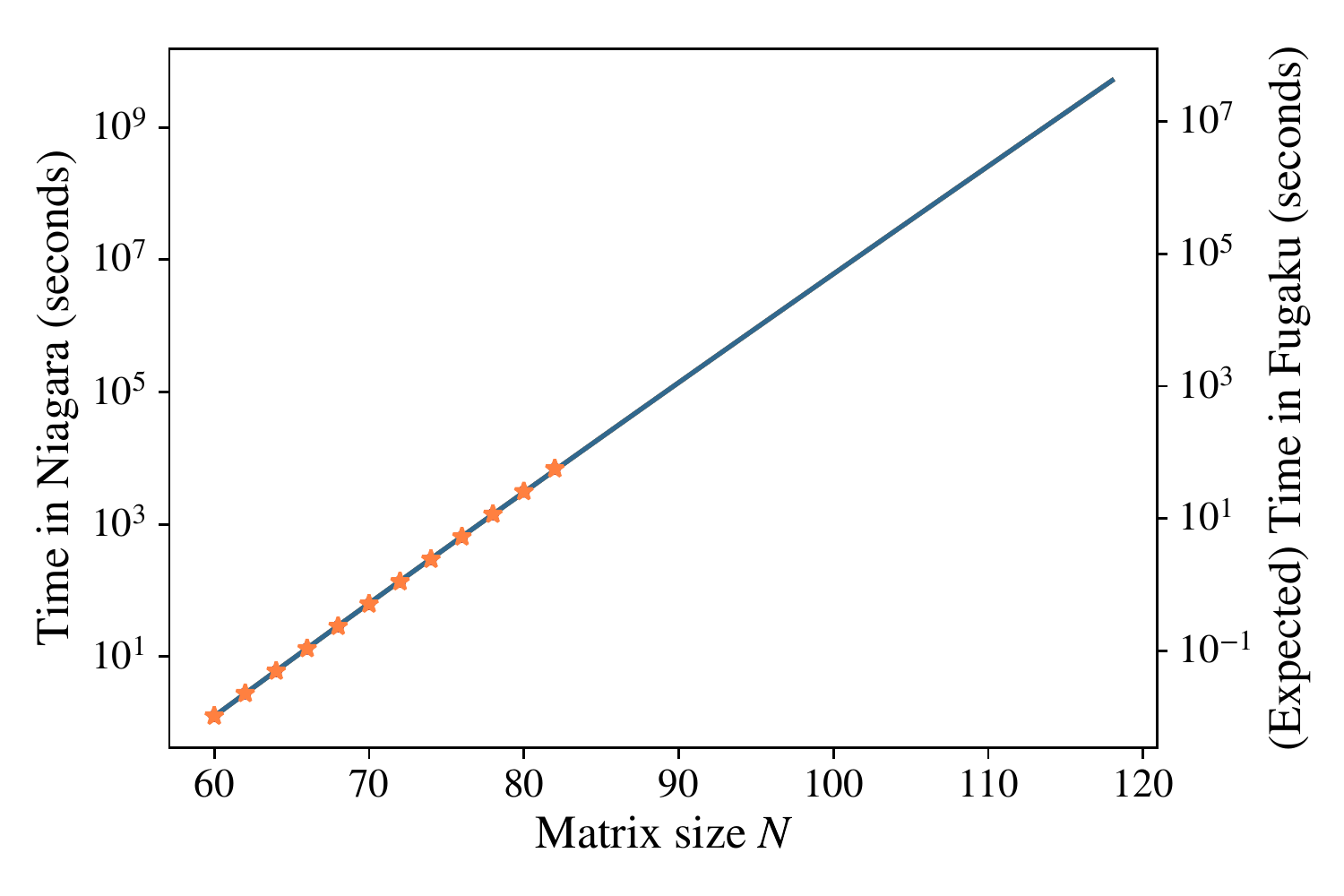}
\caption{\raggedright \label{fig:Hafnian_benchmarking}
\textbf{The time cost of calculating a Hafnian of size $n$ in double precision.} 
The stars indicate actual sizes computed in the Niagara supercomputer~\cite{ponce2019deploying}.
The blue line is a fit to $t_{\text{Niagara}}(n) = c_{\text{Niagara}} n^3 2^{n/2}$ with the only fitting parameter $c_{\text{Niagara}}= 5.42 \times 10^{-15} \ \text{s}$.
The standard deviation of fitting parameter $c_{\text{Niagara}}$ is  $1.2 \times 10^{-16} \ \text{s}$, which would give error bands thinner than the width of the line.
We find an equivalent expected time in Fugaku, among the most powerful supercomputers, by considering the ratio of their Rmax scores (maximal LINPACK performance achieved) giving their performance in number of floating point operations per second. 
The conversion factor between the left scale for Niagara and the right scale for Fugaku is the ratio of Rmax values of Fugaku and Niagara, or equivalently $c_{\text{Niagara}}/c_{\text{Fugaku}}=122.8$. 
Note that since the computation of Hafnians can be broken into the independent calculation of an exponential number of summands (known as an \emph{embarrassingly parallel} computation) this scaling is expected to be quite accurate.}
\end{figure}

If the matrix has no special property, like being low-rank, non-negative, banded, or sparse, the best known algorithms to calculate the Hafnian will scale like $O(n^3 2^{n/2})$ for a matrix of size $n \times n$.
The adjacency matrices generated in high-dimensional GBS do not have any of these properties.
In Fig.~4, we show the results of our benchmarking by implementing the Hafnian algorithm from Ref.~\cite{bjorklund2019faster} using a task-based approach implemented in Ref.~\cite{huang2020cpp}. 
Even for  shared-memory CPU architectures, our new task-based implementation achieves a speed up of about $5 \times$ with respect to the current OpenMP implementation described in Ref.~\cite{gupt2019walrus}.

Based on these benchmarks, we estimate that Fugaku, among the current most powerful supercomputers in the world, would require around 14 hours to compute the Hafnian of a $100 \times 100$ matrix. Thus, if the total-photon-number distribution of a given GBS setup has significant support past 100 photons, there will be a proportionally significant number of probability amplitudes that will require at least 14 hours in Fugaku to be computed.

We can get an estimate of the average time it would take to generate a sample by averaging the time it take to generate a sample with $n$ photons over the probability distribution of $n$ photons. Using the same averaging procedure,  but applied to clicks instead of photons and assuming an overhead of 100 between computing probabilities and generating samples, the authors of Ref.~\cite{Zhong2020} estimate that Fugaku would require around $1.9\times 10^{16}$ seconds to generate roughly the number of samples that their experiment produces in 200 seconds at MHz clock speeds.

For the lossy instance considered in Fig.~3, we find that on average Fugaku would require $F_{\text{amplitudes/samples}}\sum_{n=0}^{n_{\max}} p_{\text{lossy}}(n) c_{\text{Fugaku}} n^3 2^{n/2} \approx 4\times 10^{7}$ seconds to generate one sample. 
In this estimate we do not extend the sum to all possible photon numbers but only up to those that have a chance of more than $10^{-7}$ to occur, which happens at $n_{\max} = 166$ and moreover assume a reasonable overhead of $F_{\text{amplitudes/samples}} = 100$ for the calculation of probability amplitudes vs. samples. As noted earlier, the complexity of generating a sample for a mixed or pure Gaussian state is proportional to that of calculating a probability amplitude~\cite{quesada2020quadratic} and the number of modes (in our case 216), thus, using a factor of 100 is likely an underestimate.

In order to match the number of samples generated in seconds in a quantum device operating at 10 KHz would require $6.8\times 10^{15}$ seconds.
Thus, the computational cost of an $(r=0.8, a=6,D=3)$-high-dimensional GBS instance, with $3$ dB of loss is on par with the expected classical complexity of the USTC experiment with the added advantage of being programmable and much closer to the collision free-regime: the expected classical complexity of an experiment like the one just described is similar to the expected time complexity of the USTC experiment~\cite{Zhong2020}. However, besides the obvious disadvantage of programmability, their experiment is much farther away from the collision-free regime in which computational complexity theoretic results guarantee the intractability of GBS.

For example, if the USTC experiment had been performed with PNR detectors we would find that their photon number distribution has mean and standard distribution $83.3 \pm 20.1$ over 100 modes (where we assume the squeezing parameters quoted in Ref.~\cite{Zhong2020} and a net transmission of $\eta =0.3$). Note that even within the first standard deviation one is already beyond the total number of modes. This should be contrasted with a distribution like the one in Fig.~3, for which we find $85.2 \pm 13.9$ over $216$ modes.

\subsection*{Tensor networks methods}

Another promising method to calculate the probability amplitude of high-dimensional GBS is using tensor-network contractions.
This has been the strategy of choice for classical adversaries to superconducting circuits performing random circuit sampling~\cite{villalonga2020establishing,markov2018quantum}.
For an overview of tensor network algorithms to simulate quantum circuits, see Ref.~\cite{Biamonte2017}.

In this section, we find that tensor network algorithms can simulate two-dimensional lossy GBS experiments on ~200 modes in a reasonable amount of time.
This motivates going to a higher dimension, $D=3$.
We find that after making several allowances to the classical algorithm and accounting for tremendous improvements in classical hardware, one of the fastest supercomputers in the world, Fugaku, would take $\sim 10^{20}$ seconds to simulate a 3-D experiment on 216 modes running for 200 seconds.

Any given quantum circuit can be written as a network of tensors such that each input quantum state is a rank-1 tensor, each gate acting on $\ell$ components is a rank-$2\ell$ tensor, and each measurement operator is a rank-1 tensor~\cite{gray2020hyper}.
The probability amplitude for the quantum circuit can then be calculated by contracting the tensor network, i.e., by summing over all the indices of the tensor network.
However, there are multiple different orderings (paths) in which the different indices of a tensor network can be contracted, which influence the contraction runtime.
In fact, for general instances, the problem of finding optimal contraction paths for minimizing the time required to compute amplitudes has been shown to be $\mathsf{NP}$-hard ~\cite{pfeifer2014faster}, while actually performing the contraction is $\#\mathsf{P}$-hard~\cite{damm2002complexity,gray2020hyper}.
For some of the first classical benchmarking proposals of random circuits, the contraction paths were hand-picked by the researchers~\cite{villalonga2019flexible}. 
More recently, excellent randomized algorithms have been introduced to find contraction paths that have been shown to improve on previous results~\cite{gray2020hyper}.

A second important practical consideration for tensor-network contraction is that there is a trade-off between space and time complexity. That is, one can speed up significantly the contraction of a tensor network at the expense of assuming access to large amounts of memory. 
A systematic way to reduce the memory footprint of a tensor network contraction (at the expense of decreasing the speed of the computation) is to use a technique known as slicing, also known as variable projection or bond cutting~\cite{villalonga2019flexible}.

Unlike for Hafnian methods where one does not need to specify much of the structure of the circuit, this information is vital in understanding the performance and limitations of tensor network simulations. 
As before, we fix the squeezing parameter $r = 0.8$ and assume net end-to-end transmission of $\eta = 0.5$. 
With these parameters and first assuming $D=2$, we need at least $a=14$ lattices sites per dimension to get to a mean photon number at the detectors (i.e. after loss) of $\mathbb{E}(n)\sim 80$.
For a single cycle $C=1$ we use a tensor-network contraction algorithm called \texttt{cotengra}~\cite{gray2020hyper} together with Fugaku's LINPACK benchmark to find that this supercomputer would require less than 100 microseconds to contract the tensor network.
Thus, for 2-dimensional instances up to this size it is necessary to consider more than one cycle, implying the construction of either $D$ extra delay lines or adding a circulator, both of which will adversely affect the net transmission.

\begin{table}
  \begin{center}
    \begin{tabular}{l|l|l} 
      Number of & Expected  Time& Size of the \\
      lattice points $(a)$ & in Fugaku (seconds) & largest tensor \\
      \hline
      4 & $1.65 \times 10^{-1}$ & $4.39 \times 10^{12}$\\
      5 & $4.56 \times 10^{5}$ & $4.61 \times 10^{18}$\\
      6 & $2.11 \times 10^{14}$ &$7.92 \times 10^{28}$\\
    \end{tabular}
  \end{center}
\caption{\raggedright \label{tab:TN} Benchmarks for a $D=3$ high-dimensional GBS instance with minimal Fock space cut-off $c=4$.  The first column gives the number of lattice points, from which the number of modes follows $M = a^3$. The second column is the expected run time in Fugaku. This time is obtained by estimating the number of floating point operations required to contract the tensor using \texttt{cotengra}~\cite{gray2020hyper} and converting this into a time by using the Rmax floating point operation per second score for Fugaku. Note that \texttt{cotengra} implements randomized algorithms, thus for each problem size we run it 200 times and confirm that after the first 100 runs there is no significant variation in the best score found.
The last column gives the number of elements of the largest tensor ever needed to be stored in memory during the contraction. Note that this places restrictions on the RAM available in each of the nodes of a supercomputer. In particular the nodes in Fugaku have up to 32 Gb of RAM allowing to store on the order of $4\times 10^{9}$ 64 bit floating point numbers, thus an $a=6$ instance will  far exceed the required capacity of a single node requiring distributed storage and thus subsequent hit in efficiency due to communication complexity.}

\end{table}

This motivates considering the next dimension, $D=3$. 
For this case, and fixing the number of cycles to $C=1$, we find that we need at least $a=6$ to have a mean photon number on the order of 80 at the detectors, which would provide a non-trivial support on photon numbers that are beyond the reach of the Hafnian algorithms described above. 
In Table~1 we show the time it would take Fugaku to contract different three-dimensional GBS circuits for different lattice sizes. 

Note that even allowing for a hypothetical scenario in which the RAM of each of its nodes has been expanded by about 19 orders of magnitude, it would take Fugaku on the order $2.11 \times 10^{14}$ seconds to calculate a contraction with a minimal (and highly inaccurate) cut-off of $4$.
In reality, it is infeasible to fit the computation in the memory or even the hard disks of individual nodes, so slicing would be required, which can lead to astronomical overheads over this idealized estimate.
Even without this overhead and assuming that generating a sample is as expensive as calculating a probability, simulating a 200 second 10kHz experiment would require over $4\times10^{20}$ seconds. 
Of course, we remind the reader once more that a direct calculation of output probabilities is not what the experiment does but only what one model of the experiment, and there may be more efficient methods for simulating a verifiable experiment.

Based on the evidence presented above, a high-dimensional GBS instance with squeezing parameter $r=0.8$, in $D=3$ dimensions, with $a=6$ modes per dimension or a total of $216$ modes and a single cycle $C=1$ is well beyond the capabilities of current simulation methods based either on Hafnian calculations or tensor network contractions, even when losses of around 3 dB ($\eta \sim 0.5$) are present. This significant computational gap is present even after the fact that we allow the classical computer to ignore significant overheads in terms of cut-off, number of modes and samples-to-amplitudes conversion.
These experimental parameters we propose are within the reach of current photonics technology and their implementation using time-domain multiplexing can be achieved with a significantly reduced number of components. 

Note.--- After this submission, we became aware of a recent work \cite{Zhong2021} on an upgraded version of the experiment done in Ref.~\cite{Zhong2020}.

\acknowledgements{
We thank Juan Miguel Arrazola, Luke G. Helt, Matthew Collins, Fabian Laudenbach, Ilan Tzitrin, and Zachary Vernon for helpful discussions.
We also thank the authors of Ref.~\cite{bouland2021noise} for sharing an early version of their manuscript.
M.~H., M.~I.\ and D.~H.\ thank Karol Zyczkowski for enlightening discussions regarding the distribution of COE sub-matrices. 
\textbf{Funding:} A.~M.\ is funded by Mitacs Accelerate Program.
M.~H., M.~I.\, J.~E.\, and D.~H.\ are funded by the DFG (EI 519/21-1, EI 519/9-1, EI 519/14-1, CRC 183), the MATH+ Cluster of Excellence, the BMBF (HYBRID), the Einstein Research Foundation (Einstein Research Unit on near-term quantum devices), BMBF (QPIC), BMBF (PhoQuant), and the European Union’s Horizon 2020 research and innovation programme under grant agreement No. 817482 (PASQuanS).
The authors thank SOSCIP and SciNet for their computational resources. Computations have been performed on the Niagara and the Mist supercomputers at the SciNet-SOSCIP HPC Consortium. SciNet is funded by: The Canada Foundation for Innovation, the Government of Ontario; Ontario Research Fund - Research Excellence, and the University of Toronto. SOSCIP is funded by the Federal Economic Development Agency of Southern Ontario, the Province of Ontario, IBM Canada Ltd., Ontario Centres of Excellence, Mitacs and Ontario academic member institutions.
\textbf{Author Contributions:} Theory work was completed by authors A.~D., A.~M., M.~H., M.~I., H.~Q., J.~E., D.~H., and B.~F. Authors T.~V., N.~Q., L.~M., J.~L.~completed the experimental design and benchmarking work. Author I.~D.~managed the project and made contributions to both the theory and benchmarking work. All authors discussed the results and contributed to writing the manuscript.
\textbf{Competing Interests:} B.~F.~and A.~D.~acted as paid consultants for Xanadu Quantum Technologies while parts of this work were performed. The authors declare no other competing interests.
\textbf{Data and Materials Availability:} All data needed to evaluate the conclusions in the paper are present in the paper and/or the Supplementary Materials.
}\\

\onecolumngrid
\renewcommand{\theequation}{S\arabic{equation}}
\renewcommand{\thefigure}{S\arabic{figure}}
\setcounter{equation}{0}
\setcounter{figure}{0}
\appendix

\section*{Evidence for hiding in GBS}
 \label{sec_apdx_hiding}

In this section, we characterize the distribution of the symmetric product of $N\times K$ sub-matrices of $M\times M$ Haar-random unitaries. As described earlier, the Hafnian of such symmetric products determines the output distribution of GBS. 
Here, we give evidence that this distribution tends to the distribution of the symmetric product $XX^T$ for $X$ being an $N \times K$ Gaussian matrix. 
In GBS, this ensures the hiding property since a small $N\times N$ symmetric Gaussian matrix
$XX^T$ can be hidden in a large symmetric unitary matrix $U I_K U^T$ for any
$K\geq N$. 
Since any particular sub-matrix cannot be distinguished from any other such sub-matrix of
the same size, this enforces the constant error budget of an adversarial sampler to be
roughly equally distributed across all outcomes.

In particular, we consider three regimes---with respect to the relations between the total number of photons at the output ($N$), the number of input squeezers ($K$), and the number of modes ($M$)---in order to provide evidence for Conjecture 1.
This conjecture relates the following ensembles of random matrices.
\begin{enumerate}
    \item $\mathcal{H}^M_{N,K}$: The ensemble of $N \times K$ sub-matrices of Haar-random unitaries $U \in U(M)$.
    \item  $\mathcal G_{N,K}(\mu, \sigma^2)$: 
    The ensemble of $ N \times K$ matrices with independent and identically distributed 
    (i.i.d.) complex normal entries with 
    mean $\mu$ and variance $\sigma^2$.
    \item  COE$^M_{N,K}$: The ensemble of matrices $VV^T$ where $V \sim \mathcal{H}^M_{N,K}$.
    \item $\mathcal G^{\text{sym}}_{N,K}(\mu,\sigma^2)$: The ensemble of matrices $XX^T$ where $X \sim \mathcal G_{N,K}(\mu,\sigma^2)$.
\end{enumerate}
As the conjecture might be interesting for random matrix theory in itself, we will abstract away the meaning of the parameters $K, N, M$.
\begin{conj}[Hiding in GBS]
\label{conj:hiding_gbs}
For any $K$ such that $N \leq K \leq M$ the following statements are true:
\begin{enumerate}
  \item \label{it:asymptotics} For $ M \in O(N^{2+\epsilon})$ and $\epsilon \in (0,1]$, COE$^M_{N,K}$ asymptotically approaches $\mathcal G^{\text{sym}}_{N,K}(0,1/M)$ in probability in terms of the entrywise max-norm. 

\item \label{it:tvdist} There exists a polynomial $p$ such that for any $\delta > 0$ and $M \geq p(N)/\delta$, the total-variation distance $\|{\cdot\|}_{TV}$ between COE$^M_{N,K}$ and $\mathcal G^{\text{sym}}_{N,K}(0,1/M)$satisfies
  \begin{equation}
    {\|\text{COE}^M_{N,K} - \mathcal G^{\text{sym}}_{N,K}(0,1/M)\|}_{\text{TV}} \in O(\delta). 
  \end{equation}
\end{enumerate}
\end{conj}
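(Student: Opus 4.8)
The plan is to view both ensembles as pushforwards of a base ensemble under the single \emph{unconjugated} symmetric-product map $\Phi(Y) = YY^T$: we have $\mathrm{COE}^M_{N,K} = \Phi_\#\,\mathcal{H}^M_{N,K}$ and $\mathcal{G}^{\mathrm{sym}}_{N,K}(0,1/M) = \Phi_\#\,\mathcal{G}_{N,K}(0,1/M)$. The tempting first move is to bound $\|\mathrm{COE}^M_{N,K} - \mathcal{G}^{\mathrm{sym}}_{N,K}(0,1/M)\|_{\mathrm{TV}} \le \|\mathcal{H}^M_{N,K} - \mathcal{G}_{N,K}(0,1/M)\|_{\mathrm{TV}}$ by the data-processing inequality, and then invoke the block-Gaussian estimates of Jiang and Jiang--Ma for the right-hand side. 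This works in the sparse regime $NK = o(M)$ but becomes vacuous once $K$ is a constant fraction of $M$ (at $K=M$ the sub-block $V$ literally consists of orthonormal rows and is macroscopically distinguishable from $X$). Since the conjecture asserts closeness for \emph{all} $K \le M$, the genuine content is to compare the two pushforwards directly, exploiting that $\Phi$ only ever sees the unconjugated products $V_{ik}V_{jk}$, which are far less constrained by unitarity than $V$ itself: unitarity pins down the conjugated sums $\sum_k V_{ik}\bar V_{jk}$, not $\sum_k V_{ik} V_{jk}$.

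For the max-norm statement (Part~1), I would run the moment method as the main engine. The entries $(VV^T)_{ij} = \sum_{k=1}^K V_{ik}V_{jk}$ are sums over the column index, so I would compute all joint moments of the family $\{(VV^T)_{ij}\}_{i \le j}$ using Weingarten calculus for $U(M)$ and the matching moments of $\{(XX^T)_{ij}\}$ via Wick's theorem. The leading Weingarten contribution reproduces exactly the independent-Gaussian pairings, and the corrections are organised by the genus expansion and suppressed by powers of $1/M$; in the regime $M \in O(N^{2+\epsilon})$ these corrections should remain summable against the $O(N^2)$ entries and the relevant moment orders. To upgrade matched moments into convergence in probability of $\max_{ij}|(VV^T)_{ij} - (XX^T)_{ij}|$ I would use an explicit coupling---build $V$ from $X$ by a Gram--Schmidt/QR step as in Jiang---write the entrywise difference of the products in closed form, and control $\mathbb{E}\,\max_{ij}|\cdot|$ by the moment estimates together with a union bound over the $O(N^2)$ entries.

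The total-variation statement (Part~2) is the hard one, and I expect the main obstacle there. Rather than TV directly, I would try to bound a stronger divergence---$\chi^2$ or relative entropy---between the two pushforward densities and finish with Pinsker. Both targets are laws on complex symmetric $N \times N$ matrices: the Gaussian-product density is explicit, while the COE-block density can in principle be generated from Weingarten correlation functions by integrating Haar measure over the complementary $M-K$ columns. The crux is precisely the large-$K$ regime: one must show that the orthonormality constraints, which are macroscopically visible in $V$, leave only a subleading imprint on $VV^T$, i.e.\ that every Weingarten correction to the product moments is suppressed strongly enough that the accumulated error over all $O(N^2)$ entries and all relevant moment orders stays $O(\delta)$ as soon as $M \ge p(N)/\delta$. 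This is a delicate combinatorial estimate, and the step from moment/cumulant agreement to an honest total-variation bound (as opposed to mere convergence in distribution) is where I anticipate needing a genuinely new ingredient---for instance a local central limit theorem for the entries of $VV^T$, or an explicit anticoncentration/small-ball input for the product density.

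Finally, I would pin down the two computable endpoints before attacking general $K$: the sparse case $M = \omega(K^5),\,N=K$ must reduce to the Aaronson--Arkhipov hiding lemma, and $K=M$ is exactly the top-left $N\times N$ block of a circular-orthogonal-ensemble matrix, for which Jiang's entrywise results apply. Matching the moment computation to both limits gives a consistency check, and I would separately verify the second-moment (covariance) agreement of the entries by hand, since any mismatch at that order would immediately refute the conjecture.
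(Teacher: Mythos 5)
First, a point of framing: the statement you are attempting is Conjecture~6 of the paper, and the paper itself does not prove it. The authors state explicitly that proving it is an open problem in random matrix theory; what they supply is evidence in three regimes: (i) the ``AA regime'' $M \in \Omega(K^5 \log^2 K)$, $N = K$, where both parts follow from the Aaronson--Arkhipov total-variation bound $\|\mathcal H^M_{N,K} - \mathcal G_{N,K}(0,1/M)\|_{\mathrm{TV}} \in O(\delta)$ for $M \geq (N^5/\delta)\log^2(N/\delta)$, pushed forward through the map $Y \mapsto YY^T$ exactly as in your data-processing step; (ii) the regime $K = M$, where Part~1 (and only Part~1) follows from Jiang's theorem on $N \times N$ blocks of COE matrices in the collision-free scaling; and (iii) the intermediate regime, where the paper gives only numerical evidence, comparing finite-bin approximations of the singular-value spectra of the two ensembles. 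So there is no proof in the paper to match your attempt against; the correct comparison is against this body of partial evidence.

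Measured against that, your proposal correctly reproduces the two analytic endpoints --- your observation that data processing plus the AA bound handles the sparse regime, and that $K = M$ is precisely Jiang's COE-block setting, is exactly the evidence the paper assembles. But the heart of your proposal, the general-$K$ argument, contains a genuine gap that you yourself flag: Weingarten/Wick moment matching, even carried out to all orders with controlled genus corrections, yields at best convergence of moments and hence weak convergence; there is no route from there to the total-variation statement of Part~2 without precisely the ``new ingredient'' you ask for --- a local central limit theorem or anticoncentration input for the density of $VV^T$. This is not a technicality: the inverse-polynomial-rate TV bound for all $K$ up to $M$ is the entire content of the conjecture, and it is exactly what remains open. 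Your Part~1 plan (Gram--Schmidt coupling plus a union bound over $O(N^2)$ entries) is plausible in outline but also unexecuted: at $K$ comparable to $M$ the orthonormality constraints are macroscopic in $V$ itself, and the claim that their imprint on $VV^T$ is subleading is again the conjecture rather than something your moment computation establishes. In short, your proposal is a sensible research program whose known-to-work pieces coincide with the paper's evidence, but it is not a proof --- and the paper, deliberately, does not contain one either.
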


Here, we give analytical evidence that the characterization of Conjecture \ref{conj:hiding_gbs} holds true in the extreme cases of $K=N$ and $K=M$ for $M$ growing fast enough with $N$ and numerically show that it is true for any $K$ such that $N\leq K\leq M$. 

In the first regime we consider $K$ is such that $M \in \Omega(K^5 \log^2 K)$ and $N = K$. 
This regime closely resembles the one in the original boson sampling proposal (thus we refer to it as the ``AA regime'') for which we will see that both parts 1.~and 2.~are provably true. 
In this regime, Aaronson and Arkhipov~\cite{Aaronson2013} have proven that all $N\times K$ sub-matrices of Haar-random linear-optical unitaries $U$, are approximately Gaussian distributed. In particular they show that $\mathcal{H}^M_{N,K}$ asymptotically approaches $\mathcal{G}_{N,K}(0,1/M)$ as well as bounding the rate of convergence by showing  that the total-variation distance satisfies
\begin{equation}
{\|\mathcal{H}^M_{N,K} - \mathcal G_{N,K}(0,1/M)\|}_{\text{TV}} \in O(\delta)
\end{equation}
for $M \geq (N^5/\delta) \log^2 (N/\delta)$ \cite{Aaronson2013}. Using this we can directly see that Conjecture 6 is also true in the ``AA regime''. 

On the other end of the spectrum, we consider the regime in which $K=M$ where part 1.~of the conjecture is provably true. For this case, Jiang~\cite{jiang_entries_2009} has shown that the distribution of $N \times N$ sub-matrices of $M \times M$ COE matrices for $M \in o(\sqrt N/ \log N)$ asymptotically approaches the distribution of matrices $XX^T$, where $X\sim \mathcal  G_{N,M}(0,1/M)$.

Finally, there is the intermediate regime in which $M^{1/5} \lesssim K < M$.  
This regime interpolates between the two extreme regimes of very small, square sub-matrices of $U$ and very short, fat sub-matrices of $U$.
A priori, there is no reason to believe why the behaviour should differ from the extreme regimes. 
Indeed, for this regime we can provide numerical evidence for both parts of Conjecture 6. 

We do so by comparing the singular-value spectra of matrices drawn according to COE$^M_{N,K}$ and $\mathcal G^{\text{sym}}_{N,K}(0,1/M)$, respectively. 
Since both distributions COE$^M_{N,K}$ and $\mathcal G^{\text{sym}}_{N,K}(0,1/M)$ over complex, symmetric $N\times N$ matrices are invariant under conjugation with $V \cdot V^T$ for any $N\times N$ unitary matrix $V$, the probability of drawing a particular matrix $C$ from these distributions depends only on the singular values of the matrix $C$. 
Consequently, the distribution of singular values captures the essence of both distributions alike. 
Let $P(r)$ denote this distribution, that is, the distribution over singular values $r$ of a matrix $C$ drawn either from COE$^M_{N,K}$ and $\mathcal G^{\text{sym}}_{N,K}(0,1/M)$.

\begin{figure*}
    \centering
    \includegraphics[width = \linewidth]{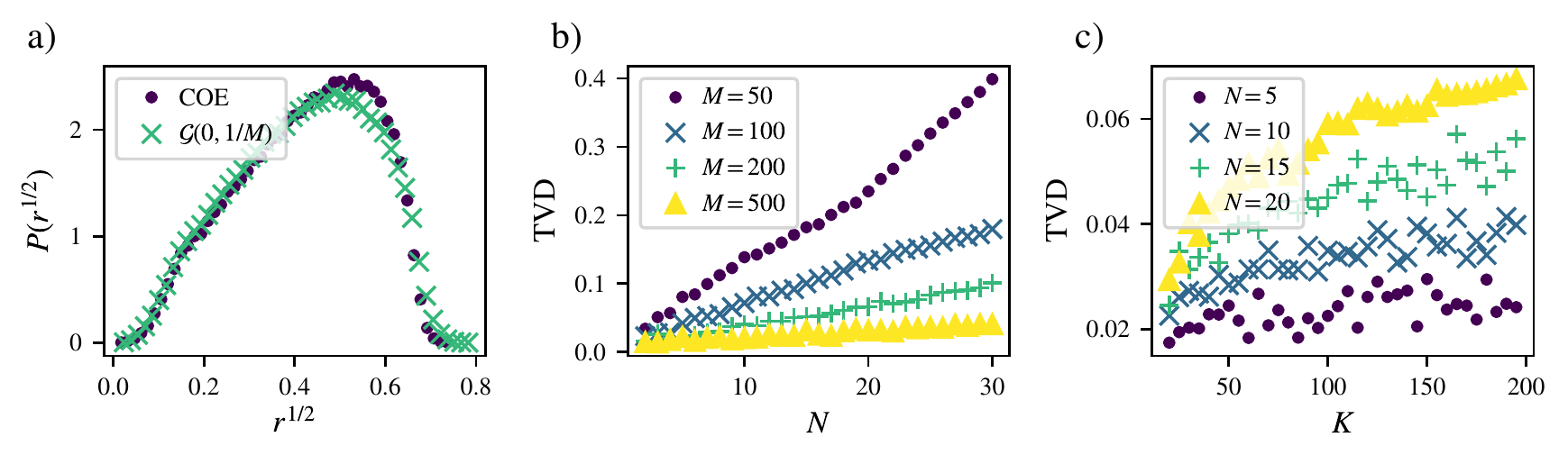}
    \caption{\raggedright
    \textbf{Numerical evidence that the ensembles COE$^M_{N,K}$ and $\mathcal G^{\text{sym}}_{N,K}(0,1/M)$ converge in total-variation distance for any $K \geq N$ so long as $N \in o(\sqrt{M})$.} 
    a) The singular-value spectra of COE$^M_{N,K}$ and $\mathcal G^{\text{sym}}_{N,K}(0,1/M)$ for $M=200$, $K=200$, and $N=10$.
    b) Total variation distance between singular-value spectra of COE$^M_{N,K}$ and $\mathcal  G^{\text{sym}}_{N,K}(0,1/M)$ for different $M=K$ as a function of $N$.
    c) Total variation distance between singular-value spectra of COE$^M_{N,K}$ and $\mathcal G^{\text{sym}}_{N,K}(0,1/M)$ for $M=200$ and different $N$ as a function of $K$.}
    \label{fig:hiding_numerics}
\end{figure*}

In Fig.~S1(a), we show the finite approximation to the distribution $P(r)$ for both ensembles under consideration for fixed values of $M,K,N$. 
While the distributions differ (as expected for any finite matrix size), they are already very close to each other for reasonably small matrices. 
In Figs.~S1(b) and (c), we then further investigate the scaling of the total-variation distance between finite-bin approximations of $P(r)$ for COE$^M_{N,K}$ and $\mathcal G^{\text{sym}}_{N,K}(0,1/M)$ in the size of the sub-matrices. 
In Fig.~S1(b), we consider the scaling of the total-variation distance in the short side $N$ of $N \times M$ sub-matrices, i.e., for the second regime where $K=M$. 
As expected, the total-variation distance increases with $N$ but decreases as the relative size of $N$ to $M$ decreases, too. 
This provides evidence that the rigorous result about the asymptotic convergence of COE$_{N,K}$ and $\mathcal G^{\text{sym}}_{N,K}(0,1/M)$ for $K=M$ due to Jiang~\cite{jiang_entries_2009} can be strengthened to an inverse polynomial total-variation distance bound (Conjecture 6.2).
Finally, in Fig.~S1(c), we show that the size of the long side $K$ of the sub-matrices does not significantly affect the total-variation distance in the regime of $N \ll M$ (the collision-free regime. 
This constitute evidence that the value of $K \geq N$ does not make a significant difference to the closeness of the distributions COE$_{N,K}$ and $\mathcal G^{\text{sym}}_{N,K}(0,1/M)$ of symmetric matrix products. 

To summarize this section, we have formulated an interesting conjecture regarding the distribution of symmetric products of sub-matrices of Haar-random unitaries.
In the main text, we argued that this conjecture captures the hiding property for Gaussian boson sampling. 
Here, we have provided analytical evidence for the conjecture in the two extremal regimes of $K=N$ (where we know both parts to be true) and $K=M$ (where we know part 1.\ to be true). 
We then provided numerical evidence for an inverse polynomial total-variation distance bound for any value of $K$ such that $N \leq K \leq M$. 

Let us note that -- as in the case of standard boson sampling -- our conjecture does not apply to the case in which $N \in \Omega(\sqrt M)$. Indeed, for the case of $N = K \in \Omega(\sqrt M)$ Ref.~\cite{jiang_entries_2009} shows that $\mathcal H_{N,N}^M$ and $\mathcal G_{N,N}(0,1/M)$ are far from each other in total-variation distance.
This indicates that the statement of our conjecture does not hold in this case since there is no `short side' of $U_{\mathbf n,1_K}$.


\section*{Average-case hardness of computing GBS output probabilities} \label{sec_apdx_avgcaseGBS}

In this section, we show average-case hardness of computing GBS output probabilities. As explained in the main text, this amounts to showing that the following problem is $\#\mathsf{P}$-hard.

\begin{problem}{$(\delta,\epsilon)$-Squared-Hafnians-of-Gaussians}
Input & A matrix $XX^T$ with $X  \sim \mathcal G_{N,K}(0,1/{M})$.\\
Output & ${|\mathrm{Haf}{(XX^T)}|}^2$ to additive error $\epsilon$, with probability $\geq \delta$ over the distribution $\mathcal G_{N,K}(0,1/{M})$.
\end{problem}

The proof will proceed in two steps: First, we will show that an oracle for the \textsc{$(\delta,\epsilon)$-Squared-Hafnians-of-Gaussians} problem allows one to approximate $\left|\mathrm{Haf}(YY^T )  \right|^2$ for arbitrary  $Y \in \mathbb{C}^{2N \times 2K}$. This first part of the proof constitutes the worst-to-average-case reduction. Second, we will show that approximating $\left|\mathrm{Haf}(YY^T )  \right|^2$ for arbitrary  $Y \in \mathbb{C}^{2N \times 2K}$ is actually $\#\mathsf{P}$-hard in the worst-case. We show this by reducing the task of approximating the permanent of an arbitrary complex $N\times N$ matrix to the task of approximating $\left|\mathrm{Haf}(YY^T )  \right|^2$.

\subsection*{Worst-case hardness} \label{ssec_worstcaseGBS}
Consider the following problem:
\begin{problem}{$\epsilon$-Squared-Hafnians}
\label{prob:Hafnian wc}
Input & A matrix $YY^T$ with $Y \in \mathbb{C}^{N \times K}$ for $K \in \mathbb N$, $N \in 2\mathbb{N}$ such that the entries of $Y$ are of the form $(x +iy)/\sqrt{M}$ for $|{x}|,
|{y}|$ some $O(1)$-bounded integers and additive-error tolerance $\epsilon>0$.
\\
Output & An estimate $h$ s.t. ${\left|h - \left|\mathrm{Haf}( YY^T )\right|^2 \right|} \leq \epsilon$.
\end{problem}
We prove the following Lemma.
\begin{lemma}\label{worst_case_squared_Hafnian}
The problem \textsc{$\epsilon$-Squared-Hafnians} is worst-case $\#\mathsf{P}$-hard for any additive error $\epsilon \leq 1/(2M^N)$.
\end{lemma}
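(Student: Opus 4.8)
\emph{Strategy.} The plan is to establish $\#\mathsf{P}$-hardness by a worst-case reduction from computing the permanent of a $0/1$ matrix, which is $\#\mathsf{P}$-complete by Valiant~\cite{Valiant1979}. The link between permanents and Hafnians is the classical identity that for any $n\times n$ matrix $W$,
\[
  \mathrm{Haf}\!\begin{pmatrix} 0 & W \\ W^T & 0 \end{pmatrix} = \mathrm{perm}(W),
\]
which follows immediately from the definition~\eqref{eq_Hafnian}: in a block-antidiagonal matrix the only perfect matchings carrying nonzero weight pair each of the first $n$ indices with one of the last $n$, and these matchings are in bijection with permutations $\sigma$, each contributing $\prod_i W_{i,\sigma(i)}$.

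\emph{Explicit factorization.} The genuinely delicate step---and the one I expect to be the main obstacle---is to realize this block-antidiagonal matrix as a symmetric product $YY^T$ \emph{while keeping the entries of $Y$ in the prescribed discrete form} $(x+iy)/\sqrt M$ with $O(1)$-bounded integers $x,y$. A generic Autonne--Takagi factorization would produce uncontrolled, generically irrational entries and is therefore unusable here. Instead I would write down an explicit factorization that exploits the complex field to annihilate the two diagonal blocks. Writing $n=N/2$ and
\[
  \tilde Y = \begin{pmatrix} W & iW \\ \mathds{1}_n & -i\,\mathds{1}_n \end{pmatrix},
\]
a short block computation using $WW^T+(iW)(iW)^T=0$ and $\mathds{1}_n+(-i\mathds{1}_n)(-i\mathds{1}_n)^T=0$ gives
\[
  \tilde Y\tilde Y^T = \begin{pmatrix} 0 & 2W \\ 2W^T & 0 \end{pmatrix}.
\]
Rescaling $Y=\tilde Y/\sqrt M$ then places every entry in the allowed set $\{(x+iy)/\sqrt M : x,y\in\{-1,0,1\}\}$, and since each Hafnian monomial is a product of $N/2$ entries we have $\mathrm{Haf}(cB)=c^{N/2}\mathrm{Haf}(B)$, so combining with the identity above yields
\[
  |\mathrm{Haf}(YY^T)|^2 = \Big(\tfrac{2}{M}\Big)^{N}\,\mathrm{perm}(W)^2 .
\]

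\emph{Precision bookkeeping and conclusion.} Finally I would feed $YY^T$ to the assumed solver for \textsc{$\epsilon$-Squared-Hafnians}, obtaining $h$ with $\bigl|h-|\mathrm{Haf}(YY^T)|^2\bigr|\le\epsilon$. Multiplying by $(M/2)^N$ estimates the nonnegative integer $\mathrm{perm}(W)^2$ with error at most $(M/2)^N\epsilon \le 2^{-(N+1)} < 1/2$ whenever $\epsilon\le 1/(2M^N)$; rounding to the nearest integer therefore recovers $\mathrm{perm}(W)^2$ exactly, and its nonnegative square root recovers $\mathrm{perm}(W)$ for the $0/1$ matrix $W$. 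This exhibits a polynomial-time reduction from the $\#\mathsf{P}$-complete $0/1$ permanent, proving the lemma. The construction respects the admissible shape ($N$ even, $K=N$), and any larger $K$ is obtained by appending zero columns to $Y$, which leaves $YY^T$ unchanged. The only step requiring real care is the factorization with controlled entries; the permanent--Hafnian identity, the Hafnian scaling law, and the error propagation are all routine once that factorization is in hand.
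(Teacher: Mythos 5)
Your proof is correct, and it diverges from the paper's argument precisely at the step you flagged as the crux. The paper's proof shares your outer skeleton---reduce from the $0/1$ permanent via $\mathrm{Per}(G)=\mathrm{Haf}\bigl[\begin{smallmatrix}0 & G\\ G^T & 0\end{smallmatrix}\bigr]$, then use discreteness of the possible values to argue that additive error $1/(2M^N)$ determines the answer exactly---but it realizes the block matrix $B(G)$ as $XX^T$ via the Autonne--Takagi decomposition $B(G)=WDW^T$, setting $X'=WD^{1/2}$ and padding with zero columns. Since the Takagi factors are generically irrational, the paper must then invoke continuity of the Hafnian together with a rescaling to pass to a nearby matrix $Y$ whose entries have the prescribed form $(x+iy)/\sqrt M$ with $O(1)$-bounded integers $x,y$; this discretization step is stated quite briskly in the paper, and it is exactly the difficulty you anticipated. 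Your explicit factorization
\[
\tilde Y=\begin{pmatrix} W & iW\\ \mathds{1}_n & -i\,\mathds{1}_n\end{pmatrix},\qquad
\tilde Y\tilde Y^T=\begin{pmatrix} 0 & 2W\\ 2W^T & 0\end{pmatrix},
\]
sidesteps the issue entirely: the entries of $Y=\tilde Y/\sqrt M$ lie in the admissible set by construction, and the relation $|\mathrm{Haf}(YY^T)|^2=(2/M)^N\,\mathrm{perm}(W)^2$ is an exact identity rather than an approximation, so the final rounding argument is clean. What the paper's route buys is generality---the Takagi step shows that \emph{every} complex symmetric matrix is of the form $XX^T$, which situates the worst-case instances inside the full input class of the problem---whereas your construction only factorizes the specific block-antidiagonal matrices arising from permanent instances; but that is all the lemma requires, and in exchange your reduction is fully constructive and avoids the delicate continuity/rescaling argument. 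Both proofs handle $K>N$ identically, by appending zero columns.
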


\begin{proof}
Without loss of generality, we restrict to $N \leq K$. 
We begin the proof by noting that the permanent of any square matrix $G$ can be expressed as the Hafnian of a corresponding block matrix twice the size of $G$~\cite{Kruse2019},
\[
\text{Per}(G) = \mathrm{Haf}\left[\begin{pmatrix}
0 & G\\
G^T & 0
\end{pmatrix} \right].
\]
Hence, computing the squared permanent of any complex  $N/2 \times N/2$ matrix  $G\in \mathbb C^{N/2 \times N/2}$ reduces to computing the squared Hafnian of a corresponding block matrix 
\begin{equation}
B(G)=  \begin{pmatrix}
0 & G\\
G^T & 0
\end{pmatrix}.
\end{equation}
Computing the squared permanent exactly is known to be worst-case $\#\mathsf{P}$-hard even over $0/1$-matrices  \cite{Valiant1979,Aaronson2013}.

Next we note that any matrix $B(G)$ for $G \in \mathbb C^{N/2 \times N/2}$ can be decomposed as $XX^T$ in terms of some complex matrix $X \in \mathbb{C}^{N \times K}$. 
Indeed the block matrix $B(G)$ is a complex, symmetric matrix, so we can decompose it using the Takagi decomposition as $WDW^T$, where $W \in U(N)$ is a unitary matrix and $D \in \mathbb R^{N \times N}$ is a nonnegative diagonal  matrix. 
We now define $X'=(WD^{1/2})$ and $X$ by appending $(K-N)$ all-$0$-columns to $X'$. 
This gives rise to a decomposition of $B(G) = XX^T$ with $X \in \mathbb C^{N \times K}$. Hence it is $\#\mathsf{P}$-hard to exactly compute the Hafnian of matrices of the form $XX^T$ in the worst case. Additionally, since the Hafnian is a continuous function, we can compute $\mathrm{Haf}(XX^T)$ to an arbitrary level of precision by considering $\mathrm{Haf}(YY^T)$ with the entries of $Y$ being of the form $x+iy$, with $x$ and $y$ integers (by suitably rescaling the entries of the matrix). Finally, we note that by normalization we can assume that the entries of the matrix $Y$ are of the form $(x +iy)/\sqrt{M}$ with $x$ and $y$ $O(1)$ bounded integers. Then the squared Hafnian of $YY^T$ is an integer multiple of $1/M^N$. Therefore, computing the Hafnian of $YY^T$ up to additive error of $1/(2M^N)$ serves to compute the squared Hafnian of $B(G)$ exactly, which is $\#\mathsf{P}$-hard.
This concludes the proof.

The proof holds equally for $N \in \mathsf{poly}(K)$: in this case we embed a square matrix in $\mathbb C^{K \times K}$ and append $0$ rows instead of columns. 
\end{proof}

\subsection*{Worst-to-average equivalence}

We now prove the average-case hardness of computing GBS output probabilities. That is, we prove the following Lemma:
\begin{thm}[Theorem 3 restated] 
The \textsc{$(\delta,\epsilon)$-Squared-Hafnians-of-Gaussians} problem is $\#\mathsf{P}$-hard under $\mathsf{PH}$ reductions for any $\epsilon \leq O\left(\exp[-6N\log N - \Omega(N)]\right)$ and any constant $\delta > 3/4$.
\end{thm}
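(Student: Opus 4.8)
The plan is to establish the theorem via the two-step structure already announced: the worst-case $\#\mathsf{P}$-hardness of \textsc{$\epsilon$-Squared-Hafnians} (Lemma~\ref{worst_case_squared_Hafnian}) is inherited by the average-case problem through a randomized worst-to-average-case reduction based on robust polynomial interpolation. Concretely, I would assume an oracle $\mathcal{O}$ solving \textsc{$(\delta,\epsilon)$-Squared-Hafnians-of-Gaussians} and show that $\mathcal{O}$, together with internal randomness and robust decoding, lets one compute $|\mathrm{Haf}(YY^T)|^2$ for an \emph{arbitrary} admissible worst-case $Y$ to additive error below $1/(2M^N)$, which is $\#\mathsf{P}$-hard. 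Since the reduction is inherently randomized and oracle-dependent, this yields $\#\mathsf{P}$-hardness under $\mathsf{PH}$ reductions, following the framework of Refs.~\cite{Aaronson2013,bouland2021noise,kondo2021fine-grained}.

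The starting observation is that $p(X):=|\mathrm{Haf}(XX^T)|^2$ is a polynomial of total degree $2N$ in the real and imaginary parts of the entries of $X$: the Hafnian $\mathrm{Haf}(XX^T)$ is degree $N/2$ in the entries of $XX^T$ and hence degree $N$ in those of $X$, and taking the squared modulus doubles this. Restricting $p$ to an affine line in matrix space therefore produces a univariate polynomial of degree $2N$, reconstructible from $O(N)$ evaluations. Fixing the worst-case target $Y$ and drawing $X\sim\mathcal{G}_{N,K}(0,1/M)$, I would use the variance-preserving line $X(t):=(1-t)X+tY$, with $X(0)\sim\mathcal{G}(0,1/M)$ and $X(1)=Y$, and write $q(t):=p(X(t))=(1-t)^{2N}\,|\mathrm{Haf}(\tilde X(t)\tilde X(t)^T)|^2$ where $\tilde X(t):=X+\tfrac{t}{1-t}Y$. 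The rescaling is chosen precisely so that $\tilde X(t)\sim\mathcal{G}\!\left(\tfrac{t}{1-t}Y,\,1/M\right)$ has the exact target variance $1/M$, differing from $\mathcal{G}(0,1/M)$ only by a mean shift.

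The reduction then queries $\mathcal{O}$ on the rescaled points $\tilde X(t_i)$ at $m=\Theta(N)$ small values $t_i\in[0,\Delta]$. Because the admissible entries obey $|Y_{jk}|=O(1/\sqrt M)$, the per-coordinate mean shift is $O(t/(1-t))$ standard deviations, so the total-variation distance between each query distribution and $\mathcal{G}(0,1/M)$ is $O(\sqrt{NK}\,t)$; taking $\Delta$ polynomially small keeps this below a small constant. A hybrid argument then guarantees that $\mathcal{O}$ returns a valid $\epsilon$-approximation at each $t_i$ with probability at least $\delta-o(1)>3/4-o(1)$ over $X$, so that by Markov's inequality at most a sub-$1/2$ fraction of the $m$ points are corrupted with good probability. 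Multiplying each valid answer by $(1-t_i)^{2N}\le 1$ recovers noisy samples of $q$, and robust polynomial interpolation (a noise-tolerant Berlekamp--Welch variant, as used in the cited works) reconstructs a degree-$2N$ polynomial agreeing with $q$ up to additive error $O(\epsilon)$ on $[0,\Delta]$ despite the corrupted evaluations. Finally I would extrapolate this reconstruction to $t=1$ to obtain $p(Y)=q(1)$.

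The main obstacle, and the source of the stringent robustness requirement, is the extrapolation step. Extrapolating a degree-$2N$ polynomial known only to additive error $O(\epsilon)$ on the short interval $[0,\Delta]$ out to $t=1$ amplifies the error by the associated Lagrange/Chebyshev factor $(\Theta(1)/\Delta)^{2N}$. For the final error to fall below the worst-case tolerance $1/(2M^N)=\exp[-2N\log N-O(1)]$ (using $M=\Theta(N^2)$), one needs $\epsilon$ times this amplification to be sub-$\exp[-2N\log N]$; carrying out the budget with $\Delta$ fixed by the total-variation constraint $O(\sqrt{NK}\,t)\lesssim 1$ and accounting for the polynomial overheads in $m$ and the block-embedding dimensions yields the stated threshold $\epsilon\le O(\exp[-6N\log N-\Omega(N)])$. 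The two delicate points are (i) keeping the query points genuinely Gaussian-distributed, handled by the variance-preserving rescaling together with the mean-shift total-variation bound, and (ii) making the interpolation simultaneously robust to the oracle's additive error and to its $(1-\delta)$-fraction of outright failures, for which the low degree $2N$ and the error-correcting decoder are exactly what is required.
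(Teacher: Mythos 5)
Your proposal follows essentially the same route as the paper's proof: worst-case $\#\mathsf{P}$-hardness of \textsc{$\epsilon$-Squared-Hafnians} at error $1/(2M^N)$, followed by a worst-to-average-case reduction that interpolates along the line $X(t)=(1-t)X+tY$, queries the oracle only on a polynomially short interval so that the query distribution stays total-variation-close to $\mathcal G_{N,K}(0,1/M)$ (keeping the per-query success probability above $3/4$), and then extrapolates to $t=1$ using the robust Berlekamp--Welch algorithm of Bouland et al., whose $\exp[d\log\kappa^{-1}+O(d)]$ amplification with $d=2N$ and polynomially small $\kappa$ produces the $\exp[-6N\log N-\Omega(N)]$ budget. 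Your variance-preserving rescaling (querying $\tilde X(t)$ and multiplying the answers by $(1-t)^{2N}$) and the Pinsker-type bound $O(\sqrt{NK}\,t)$ are mild refinements of the paper's direct treatment, which absorbs the variance change into a distribution $\mathcal G'$ and bounds the shift by $O(t\max(N,K)^2)$; they do not alter the structure of the argument and, if anything, would permit a slightly larger admissible $\epsilon$, which still implies the stated theorem.
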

We first sketch the proof idea and elaborate on the technique used.
The overall idea is to give a worst-to-average-case reduction from the problem \textsc{$\epsilon$-Squared-Hafnians} to the problem \textsc{$(\delta,\epsilon)$-Squared-Hafnians-of-Gaussians}.
The worst-case $\#\mathsf{P}$-hardness of problem \textsc{$\epsilon$-Squared-Hafnians} has already been established.
 
We use the same technique as Refs.~\cite{Aaronson2013,bouland2021noise} to establish this reduction.
Assume that we are given an oracle $O$ that solves \textsc{$(\delta,\epsilon)$-Squared-Hafnians-of-Gaussians}, meaning that with probability at least $\delta$ over the input $X$, it outputs a squared Hafnian of $XX^T$ to additive error $\epsilon$.
The rest of the time, it may output an incorrect value, with no guarantees whatsoever on how close the output is to the desired output.
In the following, we will show how to use the oracle $O$ to obtain the squared Hafnian of an arbitrary worst-case matrix $YY^T$ with high probability (this latter probability is over the choice of the random variables instantiated in the algorithm).

The key idea is that for $X \in \mathbb{C}^{N \times K}$, the quantity $|{\mathrm{Haf}{(XX^T)}}|^2$ is a degree $2N$ polynomial over the entries of the matrix $X$.
This allows for the use of polynomial interpolation to recover the squared Hafnian of an arbitrary worst-case matrix $YY^T$.
An important technique we use in this proof is the robust Berlekamp-Welch algorithm due to Ref.~\cite{bouland2021noise}, which is important for polynomial interpolation over $\mathbb{R}$ as opposed to a finite field.
Polynomial interpolation over the reals is a technique often used for the problem of average-case hardness of computing output probabilities of random quantum circuits~\cite{Bouland2019,Movassagh2019}.
The Berlekamp-Welch algorithm cannot be used as is for the reals, and therefore, recent works~\cite{Bouland2019,Movassagh2019} use techniques like Lagrange interpolation.
The new robust Berlekamp-Welch algorithm of Ref.~\cite{bouland2021noise} allows for improved robustness of the worst-to-average-case reduction. 

As an example, in the context of random quantum circuits over $n$ qubits and $m$ gates, Lagrange interpolation can only give average-case $\#\mathsf{P}$-hardness of computing output probabilities to error $2^{-O(m^3)}$ rather than the $O(2^{-n})$ that suffices for proving the hardness of approximate sampling (see \cite{Aaronson2013, Movassagh2019}).
The modified Berlekamp-Welch algorithm of Ref.~\cite{bouland2021noise}, which is boosted with an $\mathsf{NP}$ oracle, can sidestep the need for Lagrange interpolation and obtain average-case $\#\mathsf{P}$-hardness with $2^{-O(m \log m)}$ error (see also, the recent work of Kondo \emph{et al}.~\cite{kondo2021fine-grained} which also obtains this robustness error).

\begin{thm}[Robust Berlekamp-Welch algorithm~\cite{bouland2021noise}] \label{thm_BW}
Let $p$ be a univariate polynomial of degree $d$ over the reals.
Suppose that we have $k \geq 100d^2$ points $(x_i,y_i)$, with $\{x_i\}$ uniformly spaced in the interval $[0,\kappa]$ and obeying the promise
\[
 \Pr[|{y_i - p(x_i)}| \geq \Delta] \leq \eta < \frac{1}{4} .
\]
Then there is a $\mathsf{P}^\mathsf{NP}$ algorithm that can estimate $p(1)$ to additive error $\Delta \exp[d \log \kappa^{-1} + O(d)]$ with probability at least 2/3.
\end{thm}

\begin{proof}[Proof of Theorem 3]
The polynomial interpolation procedure is as follows.
Let $X(t)$ be the matrix obtained by drawing a random $X \sim \mathcal G_{N,K}(0,1/M)$ and setting 
\[
X(t) := (1-t) X + tY,
\]
where $Y$ is the matrix corresponding to the worst-case instance.
Now, the quantity 
\[
p(t) := |{\mathrm{Haf} (X(t)X^T(t))}|^2 
\]
is a polynomial of degree $2N$ over the entries of $X(t)$, and consequently, over $t$ itself.
For $t$ close to 0, $X(t)$ is close to Gaussian distributed, while when $t$ is close to 1, the distribution is close to being deterministic.
We select $k$ points in the range $[0,\kappa]$ and query the oracle $O$ for the value of $p(t)$ for these points.
By the promise, the oracle outputs the correct value of $p(t)$ for most values of $t$ with high probability.
Conditioned on this event, the robust Berlekamp-Welch algorithm stated in Theorem 9 allows one to reconstruct the polynomial in the second level of the polynomial hierarchy.
The polynomial can then be evaluated at the point $t=1$ to obtain an estimate of the squared Hafnian of the worst-case matrix $YY^T$.

We now check that the conditions of Theorem 9 are met.
We say that a call to the oracle $O$ is successful if it outputs the squared Hafnian of a matrix to additive error $\epsilon$.
By assumption, for $X$ drawn at random from $\mathcal G_{N,K}(0,1/M)$, the oracle is successful with probability at least $\delta$.
Note however that the matrix $X(t) = (1-t) X + tY$ is not exactly distributed according to $\mathcal G_{N,K}(0,1/M)$. Instead, for small $t$, due to the rescaling by $(1-t)$ and the shift by $tY$, $X(t)$ is distributed according to a slightly different distribution $\mathcal{G}'$.
If we query the oracle for the value of $p(t)$ with matrices drawn from this different distribution $\mathcal{G}'$, the probability of success can, in the worst case, decrease.
By definition, the success probability can decrease at most by the variation distance between the two distributions $\mathcal G_{N,K}(0,1/M)$ and $\mathcal{G}'$, which is $O(t\max(N,K)^2)$.
Therefore, for $K\geq N$, the probability of success is at least $\delta - O(\kappa K^2)$.
We choose $\kappa$ to be $O(c/K^2)$ with some small enough $c$ so that the success probability is at least $ \delta - O(c) > 3/4$.
This ensures that the conditions of the theorem are met.

We finally conclude by examining the additive error to which we can compute, using the $\mathsf{BPP}^{\mathsf{NP}}$ reduction, the squared Hafnian of the worst-case matrix $YY^T$.
If the additive error for successful queries to the oracle is at most $\epsilon$, Theorem 9 implies that the error in computing $p(1)$ is $\epsilon \exp[d \log \kappa^{-1} + O(d)]$.
Plugging in $d=2N$ and $\kappa = c/N^2$, we get the total additive error in estimating $p(1)$ to be $\epsilon \exp[4N \log N + O(N)]$.
Finally, we note that the squared Hafnian is shown to be worst-case hard for additive error $O(1/M^N)$.
Therefore, we make the choice
\begin{equation}
\epsilon \exp[4N \log N + O(N)] 
= O\left(\frac{1}{M^N}\right),
\end{equation}
or
\[
\epsilon =\, O\left(\exp[-4N \log N - \Omega(N) - 2N\log N ] \right)
\\ =\, O\left(\exp[-6N \log N - \Omega(N)] \right),\nonumber
\]
where we have assumed $M=\Theta(N^2)$. 
This choice ensures that we can, with probability at least 2/3, compute the squared Hafnian of an arbitrary matrix with bounded entries of the form $YY^T$ to additive error $O(1/M^N)$. As shown in Lemma 7, this task is $\#\mathsf{P}$-hard.
This completes our proof.
\end{proof}

\section*{Average-case hardness of computing noisy GBS output probabilities} \label{sec_apdx_avgcasenoisyGBS}
We argue here that computing the output probabilities for a \emph{noisy} random GBS experiment is $\#\mathsf{P}$-hard on average.
That is, we show the following lemma.
\begin{lemma} \label{lem_worstcasehardnoisy}
There exists a polynomial $p(N)$ and a loss threshold $\eta_*$ such that \textsc{$(\epsilon,\eta)$-NoisyGBS-Probability} with $\eta \leq \eta_*$, $\delta > 3/4$, and $\epsilon \leq 2^{-p(N)}$ is $\#\mathsf{P}$-hard under $\mathsf{PH}$ reductions.
\end{lemma}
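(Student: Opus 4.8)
The plan is to follow the two-part structure announced in the main text: first establish worst-case $\#\mathsf{P}$-hardness of computing the noisy output probability (the $\delta=1$ case), and then lift this to the average case by a worst-to-average-case reduction that hinges only on the polynomial structure of the noisy probability being preserved under local noise.

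For the worst-case part, I would invoke Fujii's noise-threshold argument~\cite{fujii2016noise}. The channel in Eq.~\eqref{eq_noisechannel} is by assumption a convex combination $(1-\eta_i)\,\mathrm{id}+\eta_i\mathcal{E}_i$ of the identity and a stochastic error channel, and for optical loss the error part decomposes further into single-photon loss, two-photon loss, and so on~\cite{Oszmaniec2018}; crucially, such loss is error-detectable using only linear optics and high-threshold photodetection~\cite{bartolucci2021fusion}. One can therefore embed a linear-optical error-detecting code into the interferometer, postselect on the trivial syndrome, and recover effectively noiseless gates. Since linear optics with postselection is universal for $\mathsf{postBQP}$~\cite{Knill2002a}, the postselected probabilities can encode $\#\mathsf{P}$-hard quantities (as in the $\mathsf{postBQP}=\mathsf{PP}$ correspondence), so estimating the worst-case noisy probability to inverse-exponential additive error is $\#\mathsf{P}$-hard. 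The point to verify carefully is that the entire encoding --- squeezed inputs, the coded interferometer, and PNR detection --- fits inside the $(\epsilon,\eta)$-NoisyGBS model while keeping the per-element noise below a single fixed threshold $\eta_*$.

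For the worst-to-average-case step, the key structural fact is that under any local noise model of the form \eqref{eq_noisechannel} the noisy output probability $\Pr(\mathbf n)$ remains a polynomial of degree $d=\mathsf{poly}(N)$ in the entries of $U$. Expanding the product of noise channels gives a finite convex combination, indexed by error patterns, of conditional output probabilities, each of which is a squared Hafnian and hence a low-degree polynomial in the matrix entries; loss events merely couple modes to ancillary environment modes and enlarge the effective matrix polynomially, leaving the degree bounded. I would then mirror the proof of Theorem 3: interpolate $U(t)=(1-t)U+tY$ from the Haar-random instance $U=U(0)$ to a worst-case instance $Y=U(1)$, so that $p(t):=\Pr_{U(t)}(\mathbf n)$ is a univariate polynomial of degree $d$, query the average-case oracle $O$ at $k=\Theta(d^2)$ points in $[0,\kappa]$, reconstruct $p$ in $\mathsf{P}^{\mathsf{NP}}$ via the robust Berlekamp--Welch algorithm of Theorem 9, and evaluate at $t=1$. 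For small $t$ the instance $U(t)$ stays within total-variation distance $O(t\,\mathsf{poly}(N))$ of the Haar measure, so the oracle's success probability drops by at most this amount; choosing $\kappa=1/\mathsf{poly}(N)$ small enough keeps it above $3/4$, satisfying the bad-point-fraction hypothesis ($<1/4$) of Theorem 9. The error-amplification factor $\exp[d\log\kappa^{-1}+O(d)]=2^{\mathsf{poly}(N)}$ is precisely what fixes the polynomial $p(N)$: taking $\epsilon\le 2^{-p(N)}$ guarantees that $p(1)$, the worst-case noisy probability, is recovered to the inverse-exponential additive precision at which Part 1 certifies $\#\mathsf{P}$-hardness. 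Since the whole reduction runs in $\mathsf{BPP}^{\mathsf{NP}^{O}}$, this yields $\#\mathsf{P}$-hardness under $\mathsf{PH}$ reductions.

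The main obstacle, I expect, lies in Part 1: faithfully importing Fujii's fault-tolerance argument into the continuous-variable, linear-optical GBS setting, so that the postselected universal computation genuinely lives inside the noisy GBS model at one fixed rate $\eta\le\eta_*$ rather than demanding different noise levels for encoding, computation, and detection. In Part 2 the delicate book-keeping is to confirm that the loss channel keeps the degree $d$ polynomial in $N$ and that the total-variation bound along the interpolation path degrades only polynomially, exactly as in the noiseless analysis underlying Theorem 3.
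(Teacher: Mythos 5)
Your Part~1 is essentially the paper's argument: worst-case hardness via Fujii-style error detection \cite{fujii2016noise,bouland2021noise}, where conditioning on the no-error herald outcome $\mathbf m$ makes $\Pr_{\text{noisy}}[\mathbf n\,|\,\mathbf m]$ exponentially close to the ideal probability (itself $\#\mathsf{P}$-hard by the paper's worst-case squared-Hafnian lemma), together with postselected linear-optical universality \cite{Knill2002a}. No complaint there.

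Part~2, however, has a genuine gap: the direct linear interpolation $U(t)=(1-t)U+tY$ that you import from Theorem~3 does not survive the passage to noise, and the paper explicitly replaces it. Two things break. First, the noisy output probability is not a function of the global unitary $U$ alone: the noise acts locally after every gate, so the probability depends on the circuit implementation (the gate sequence), and two decompositions of the same $U$ yield different noisy distributions; hence $p(t):=\Pr_{U(t)}(\mathbf n)$ is not even well defined. Second, and more fatally, $(1-t)U+tY$ is not unitary at intermediate $t$, so the interpolated object is not a noisy GBS instance at all: the local noise channels cannot be applied to it, and it is not a legal query to the oracle $O$. This is exactly where the noisy case differs from Theorem~3, whose inputs are arbitrary Gaussian matrices $X$ with a full-support distribution, so that $X(t)$ stays in-distribution up to an $O(t\max(N,K)^2)$ total-variation shift. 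Here, by contrast, the distribution of $U(t)$ is supported entirely off the unitary group, so its total-variation distance to the Haar measure is $1$, not $O(t\,\mathsf{poly}(N))$, and the oracle's average-case guarantee becomes vacuous at the first query. The paper's fix is to interpolate \emph{gate-wise} along the Cayley path of Ref.~\cite{Movassagh2019},
\begin{equation}
C_i(t) = \left(t \mathds{1} + (2-t)A_i W_i^{-1}\right)\left((2-t)\mathds{1} + t A_i W_i^{-1}\right)^{-1} W_i ,
\end{equation}
which stays inside the unitary group for all $t$ (so every interpolated instance is a genuine noisy circuit whose gate distribution remains close to the random ensemble), and to purify the local noise gate-wise with finite-dimensional ancillas so that $\Pr_{\text{noisy}}[\mathbf n,\mathbf m][t]$ retains low-degree polynomial structure in $t$, to which the robust Berlekamp--Welch algorithm then applies. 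Your sum-over-paths degree counting is the right ingredient, but it presupposes an interpolation that keeps each gate unitary; without the Cayley path (or an equivalent device) the reduction fails at the outset.
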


\begin{proof}
For worst-case hardness despite the presence of noise, we follow the proof technique in Refs.~\cite{fujii2016noise,bouland2021noise}.
At a high level, the worst-case hardness follows from the error-detection property of the system.
In particular, the error-detection property implies that as long as the noise $\eta$ is smaller than a certain threshold $\eta_*$, there is a fixed outcome on a subset of the modes, say $\mathbf{m}$, such that conditioned on this outcome, the probability distribution on the rest of the modes is exponentially close to the target noiseless distribution.
In other words, we have
\[
 \left|{\Pr_\text{noisy}[\mathbf{n}|\mathbf{m}] - \Pr_\text{ideal}[\mathbf{n}]}\right| \leq 2^{-\mathsf{poly}(N)}
\]
for any desired polynomial on the right hand side.
Since $\Pr_\text{ideal}[\mathbf{n}]$ is $\#\mathsf{P}$-hard to approximate in the worst case by virtue of Lemma 7, so is computing the conditional probability
\[
 \Pr_\text{noisy}[\mathbf{n}|\mathbf{m}] = \frac{\Pr_\text{noisy}[\mathbf{n},\mathbf{m}]}{\Pr_\text{noisy}[\mathbf{m}]}.
\]
The denominator here is the probability of seeing the outcome $\mathbf{m}$, which flags the no-error event.
The probability of this can be exponentially small, and satisfies~\cite{fujii2016noise,bouland2021noise}
\[
\left|{\Pr_\text{noisy}[\mathbf{m}] - (1-\eta)^{O(Md)}}\right| \leq \Pr_\text{noisy}[\mathbf{m}] 2^{-\mathsf{poly}(N)},
\]
where $\eta$ is the maximum noise parameter as defined earlier in the main text.
In other words, for an error-detected circuit, the probability that the outcome on the subset of heralding modes is in the state $\mathbf{m}$ is exponentially close to the probability that no error occurred, which is given by $(1-\eta)^{O(Md)}$.

Therefore, approximating $\Pr_\text{noisy}[\mathbf{n},\mathbf{m}]$ is also $\#\mathsf{P}$-hard:
\begin{gather}
\left|{\Pr_\text{noisy}[\mathbf{n},\mathbf{m}] - \Pr_\text{noisy}[\mathbf{n}|\mathbf{m}] (1-\eta)^{O(Md)}}\right| \leq \Pr_\text{noisy}[\mathbf{n},\mathbf{m}] 2^{-\mathsf{poly}(N)}
\\ \Rightarrow \left|{\Pr_\text{noisy}[\mathbf{n},\mathbf{m}] - \Pr_\text{ideal}[\mathbf{n}] (1-\eta)^{O(Md)}}\right| \leq \Pr_\text{noisy}[\mathbf{n},\mathbf{m}] 2^{-\mathsf{poly}(N)} \nonumber +  2^{-\mathsf{poly}(N)}. 
\end{gather}
Since computing $\Pr_\text{ideal}[\mathbf{n}]$ to additive error $\pm O(2^{-\mathsf{poly}(N)})$ is $\#\mathsf{P}$-hard, so is computing $\Pr_\text{noisy}[\mathbf{n},\mathbf{m}]$ to additive error $O(2^{-\mathsf{poly}(N})(1-\eta)^{O(Md)}$.
A similar analysis in Ref.~\cite{bouland2021noise} shows that it is $\mathsf{coC_{=}P}$-hard to compute a noisy probability in the worst case to additive error $2^{-O(m\log m)}$ in the context of RCS.
This proves the worst-case hardness.

For the worst-to-average-case reduction, we again use the technique of polynomial interpolation in conjunction with a robust Berlekamp-Welch algorithm.
We observe that any noisy output probability for a local noise model can still be written as a polynomial in the gate entries of the circuit, using the Feynman sum-over-paths idea.
As before, we perform interpolation from a random instance from the ensemble to the worst-case-hard instance.
This is achieved now using the Cayley path interpolation technique of Ref.~\cite{Movassagh2019} instead of the direct interpolation between two matrices. This is because the noisy output probability is no longer a simple function of only the linear-optical unitary (like the Hafnian), but is also a function of the circuit implementation.
The full interpolation involves interpolating every gate of a circuit implementation from the average-case instance $A_i$ to the worst-case instance $W_i$ along the Cayley path
\[
C_i(t) = \left(t \mathds{1} + (2-t)A_i W_i^{-1}\right) \left( (2-t)\mathds{1} + t A_i W_i^{-1} \right)^{-1} \cdot W_i,
\]
which satisfies $C_i(0)=A_i$ and $C_i(1)=W_i$.
Using this interpolation and the fact that any local noise can be ``purified'' gate-wise by introducing ancillary systems of finite dimension, we can again write the noisy probability $\Pr_\text{noisy}[\mathbf{n},\mathbf{m}][t]$ as a polynomial in $t$.
The rest of the proof follows from before.
\end{proof}

\section*{Average-case hardness of computing noisy probabilities in high-dimensional GBS} \label{sec_apdx_avgcasenoisyrestrictedGBS}
For the worst-case hardness of computing noisy probabilities of the high-dimensional GBS architecture, we mainly use the previous results on error-detection of noise.
The additional ingredient used is the fact that a constant-depth linear-optical architecture in two dimensions (and higher) has been shown by Brod~\cite{Brod2015} to be hard to exactly sample from.

The proof of Ref.~\cite{Brod2015} uses post-selection to argue for exact sampling hardness.
Note that the post-selection result does not, by itself, imply the $\#\mathsf{P}$-hardness of computing output probabilities: it implies the $\mathsf{PP}$-hardness of strong simulation, which involves computing both the output probabilities and the marginals.
However, we note that the post-selection proof can often be ``opened up'' in order to directly argue about the hardness of computing output probabilities.
This is done by giving an amplitude-preserving reduction from a $\mathsf{BQP}$ circuit to the circuit family in question (here, high-dimensional GBS).
Since computing output amplitudes of $\mathsf{BQP}$ circuits is $\#\mathsf{P}$-hard, so is computing that of the circuit family in question.
Using the results from earlier, so is computing the \emph{noisy} output probability in the worst case for an error-detected circuit as long as the noise level is smaller than some (constant) threshold $\eta_*$.

The average case hardness again essentially follows by observing that there is a polynomial structure in the output probability, to prove Theorem 5.
We again use the Cayley technique of Ref.~\cite{Movassagh2019} to set up the polynomial interpolation in this case, and use results from Ref.~\cite{bouland2021noise} to strengthen it, such as using a variable rescaling and applying a robust version of the Berlekamp-Welch algorithm (Theorem 9).

\section*{Total photon number distribution}~\label{app:total-photon-number}

For pure state GBS, the total photon number distribution can be obtained efficiently by simply convolving the photon number distributions of the individual modes going into the interferometer~\cite{gupt2019walrus}. In the case where $M$ identical squeezed states (with squeezing parameter $r$) are sent into an interferometer and undergo uniform loss by transmission parameter $\eta$, the probability of obtaining $n$ photons is given by 
\[
		\Pr(n) = \begin{cases}
			\eta ^n \binom{\frac{M}{2}+\frac{n}{2}-1}{\frac{n}{2}} \mathrm{sech}^M r \tanh^n (r)
            _2F_1\left(\frac{n}{2}+\frac{1}{2},\frac{M}{2}+\frac{n}{2};\frac{1}{2};(1-\eta)^2 \tanh ^2 r \right) \text{ if $n$ is even,} \\
			(1-\eta) (n+1) \eta ^n \binom{(M+n-1)/2}{(n+1)/2} \mathrm{sech}^M r \times \\ \tanh ^{n+1} (r)  _2F_1\left(\frac{n+2}{2},\frac{1}{2}
			(M+n+1);\frac{3}{2};(1-\eta)^2 \tanh ^2 r \right) \text{if odd.}
		\end{cases}
\]
where $_2F_1(a,b,c;z)$ is a hypergeometric function.
This equation reduces to the well-known lossless limit~\cite{hamilton2017gaussian} when $\eta \to 1$, since in that case  $_2F_1\left(\frac{n}{2}+\frac{1}{2},\frac{M}{2}+\frac{n}{2};\frac{1}{2};0\right) =1$ and the probabilities for all odd photon numbers become zero since they are proportional to $1-\eta$.
This distribution has the following moments
\[
\mathbb{E}(n) =  \eta M \sinh^2 r, \\
\text{Var}(n) = \eta  M \sinh ^2 r [1+ \eta  (1+2 \sinh^2  r)] .
\]
Note that even if the losses are not uniform, one can still calculate in polynomial time the moments of the random variable $n$~\cite{dodonov1994photon,gupt2019walrus}.

\end{document}